\title{Almost Equivalent Paradigms of Contextuality}
\author{Linde Wester
\institute{University of Oxford}
\institute{Department of Computer Science}
\email{linde.wester@cs.ox.ac.uk}
}
\theoremstyle{plain}
\newtheorem{theorem}{Theorem}
\newtheorem{postulate}{Assumption}
\newtheorem{example}{Example}
\newtheorem*{example*}{Example}
\newtheorem{remark}{Remark}
\newtheorem{definition}{Definition}
\newtheorem{corollary}[theorem]{Corollary}
\newtheorem{lemma}[theorem]{Lemma}
\newtheorem{prop}[theorem]{Proposition}
\newtheorem{pos}[postulate]{Assumption}
\theoremstyle{definition}
\newcommand{\cE}{\mathcal{E}}
\newcommand{\cM}{\mathcal{M}}
\newcommand{\Emp}{\mathcal{E}mp}
\newcommand{\Ot}{\mathcal{O}T}
\newcommand{\Or}{\mathcal{O}R}
\newcommand{\R}{R}
\begin{document}
\maketitle

\begin{abstract}
Various frameworks that generalise the notion of contextuality in theories of physics have been proposed; one is the sheaf-theoretic approach by Abramsky and Brandenburger; an other is the equivalence-based approach by Spekkens. 
We show that these frameworks are equivalent for scenarios with preparations and measurements, whenever factorizability is justified. This connection gives rise to a categorical isomorphism between suitable categories. We combine the advantages of the two approaches to derive a canonical method for detecting contextuality in such settings. 

\end{abstract}

\section{Introduction}

\subsection*{Two Formalisms for Contextuality}
\label{thm:goldbach}
Contextuality of quantum mechanics entails the impossibility of assigning predetermined outcomes to observables in a way that is independent of the context or method of observation. It was first described by Kochen and Specker in ~\cite{KS}. Such an assignment would provide a hidden variable model, which is an explanation for observations of the physical world according to the laws of classical mechanics. Hence, the study of contextuality offers a way to specify the manner in which quantum mechanics deviates from the theory of classical mechanics. 

Recently, various formalisms of different scope and nature have been proposed, which generalise the currently known examples of contextuality:~\cite{abramskybrandenburger},~\cite{afls},\cite{csw}, \cite{dzhafarovkujala}, \cite{roumen}, \cite{spekkens}. The relation between some of the formalisms has been studied in~\cite{DeSilva},~\cite{statonuijlen}. In this paper, we unify the sheaf-theoretic formalism by Abramsky and Brandenburger~\cite{abramskybrandenburger} with the equivalence-based notion of contextuality developed by Spekkens in ~\cite{spekkens}. These approaches share the goal of expressing the notion of non-contextuality in a manner that is independent of the quantum formalism.  They are applicable to any operational or empirical theory, which is a high-level description of an experimental setting.

%We establish an equivalence between We combine the advantages of each formalism to give a method for detecting contextuality in theories with preparations and measurements, such as quantum circuits. This is helpful for an exploration of the connection between contextuality and quantum computing.
%The correspondence between the sheaf-theoretic approach and the graph-theoretic approach of Cabello, Severini and Winter in ~\cite{asklm} was studied in ~\cite{DeSilva}.%, corresponding to different measurement settings. %This framework contains the notion of non-locality as a special case and can be applied to a wide range of fields, including relational databases and logic. As shown in ~\cite{abramskybarbosamansfield}, sheaf cohomology can be used to detect contextuality in empirical data. 
%A graph-theoretic approach was introduced by Cabello, Severini, and Winter in ~\cite{csw}. The correspondence with the sheaf-theoretic approach was demonstrated in . 
 %This approach . Outcome determinism is not assumed, but derived from the experimentally verifiable notion of perfect correlation. This makes the verification of contextuality, described in ~\cite{kunjwalspekkens}, robust to noise.

\subsection*{The Sheaf Approach}
In the sheaf approach to contextuality, one defines contextuality as the non-existence of a joint probability distribution over the outcomes of a set of measurements. It is formulated within the mathematical framework of sheaf theory, as the non-existence of a global section for a presheaf of distributions over measurement outcomes. The mathematical framework provides algorithmic methods based on sheaf-cohomology to detect contextuality~\cite{abramskybarbosamansfield}, as well as means of quantifying contextuality~\cite{abramskybarbosamansfield2} and a structural method for deriving non-contextuality inequalities~\cite{abramskyhardy}. The sheaf approach is applicable to measurement-based quantum computing. In~\cite{Raussendorf}, it was shown that any mod-2 nonlinear computation in measurement based quantum computing with a linear classical processor requires sheaf-theoretic contextuality. %The approach can be applied to data generated by experiments without further knowledge or assumptions about the underlying physical description.
%\subsection*{Contextuality as a Resource for Quantum Computing}
%Recent work by Howard, Wallman, Veitch and Emerson ~\cite{HWVE}, and by Raussendorf in ~\cite{Raussendorf} suggests that contextuality is necessary for achieving a speedup of quantum protocols over their classical alternatives. These results are specific to measurement based quantum computing and magic state distillation, respectively, and apply the sheaf- and graph approach to contextuality. It is unclear how these results translate to other paradigms of quantum computing, such as the circuit model, which is fundamentally based on preparations and transformations. A unified theory of contextuality is important for a complete understanding of contextuality and its role as a resource for quantum computing.

\subsection*{The Equivalence-Based approach}\label{sec:prob}
Contextuality in the equivalence-based approach is defined as the non-existence of certain hidden variable models, called ontological representations, of operational theories. Such an ontological representation must be determined by the statistical data of the experiment only. It cannot depend on any additional data, regarded as the 'context'. This gives a natural explanation for operational equivalence of measurements and preparations: we cannot distinguish them because they correspond to the same ontological values. 
The formalism distinguishes three different types of contextuality: contextuality for preparations, transformations and for (unsharp) measurements. Unlike traditional notions of contextuality, the formalism does not presuppose outcome assignments in a hidden-variable model to be deterministic. In doing so, it provides a framework for contextuality tests which are robust to noise~\citep{kunjwalspekkens}~\cite{mazurek2016experimental}.
%The formalism offers a basis for  experimental tests of contextuality, which do not assume outcome-determinism on the ontological level~\citep{kunjwalspekkens}~\cite{mazurek2016experimental}.

\subsection*{Overview}
In this paper, we draw a formal connection between the sheaf-theoretic approach to contextuality on the one hand and general contextuality in the equivalence-based approach on the other hand. General contextuality is defined as the existence of at least one of the three different types of equivalence-based contextuality. 

We expand the scope of the sheaf formalism to ensure that any scenario that can be described in the sheaf formalism corresponds to an operational theory in the equivalence-based formalism and vice versa. We call the type of theories in the extended sheaf-theoretic formalism 'empirical theories'. As the notions of empirical and operational theories are interchangeable, we will often simply call them 'theories'. The joint distribution that is characteristic to a non-contextual empirical theory in the sheaf-theoretic sense gives rise to an ontological model for the corresponding operational theory. A priori, this operational theory may not be non-contextual in the equivalence-based sense, as the joint distribution may depend on data other than the outcome statistics. Therefore, we show that whenever an empirical theory is non-contextual in the sheaf sense, we can eliminate any statistically redundant data to obtain a new 'minimal' non-contextual empirical theory. We use this notion of a minimal theory to construct a non-contextual ontological representation (in the equivalence-based sense) for each non-contextual theory in the sheaf-theoretic sense. We call this the 'canonical' ontological representation of a theory.
Finally, we derive that an operational theory can be realised by a factorizable non-contextual ontological representation in the equivalence-based sense, if and only if it is non-contextual in the sheaf-theoretic sense. This is a generalisation of a result by Abramsky and Brandenburger which was formulated in terms of empirical models and the traditional notion of non-contextuality for ontological models~\cite{abramskybrandenburger}.
As a result, the contextuality argument for preparations and for unsharp measurements given by Spekkens in~\cite{spekkens} can be formulated in the sheaf-theoretic formalism. More generally, contextuality for preparations and unsharp measurements is independent of the chosen formalism for contextuality. 
Furthermore, we generalise the result in~\cite{spekkens} that equivalence-based non-contextual representations of PVM's in quantum theory are outcome-deterministic, to an analogous result for an appropriate notion of sharp measurements in general operational theories. We show that operational theories with such measurements are non-contextual if and only if their 'canonical' representations are non-contextual.

\subsection*{Outline}
In Section~\ref{sec:spek} we recall the equivalence-based approach to contextuality. We discuss how known examples of contextuality arise in this formalism by imposing additional assumptions which force the outcome assignments in an ontological model to be deterministic. We analyse the notion of contextuality beyond theories where outcome-determinism of ontological representations is justified.
As an illustration, we discuss Mermin's All Versus Nothing argument and Bell's scenario in the equivalence-based model. 
In Section~\ref{sec:emp}, we recall the sheaf-theoretic approach. We extend the framework to incorporate preparations. We explore the role of convexity in this formalism and  derive a sheaf-theoretic contextuality proof for the scenario of preparations and unsharp measurements~\cite{spekkens}.
In Section~\ref{sec:uni}, we introduce a method for constructing a canonical non-contextual ontological representation for any non-contextual empirical theory. We prove that such a canonical non-contextual ontological representation exists whenever the theory admits a factorizable non-contextual ontological representation.
 In Section~\ref{sec:catiso} we introduce the categories $\Emp$ of empirical theories, $\Ot$ of operational theories, and $\Or$ of ontological representations. We show how the correspondence between empirical and operational theories gives rise to a categorical isomorphism, which maps the subcategory of non-contextual empirical theories to the subcategory of operational theories that can be realised by a factorizable non-contextual ontological representation. 
In Section~\ref{sec:postlude}, we discuss contextuality for unsharp measurements and give an example for which the two notions of contexuality are different.

\section{The Equivalence-Based Approach}\label{sec:spek}
The notion of contextuality used in the classic examples~\cite{KS},~\cite{hardy},~\cite{mermin}~\cite{bell}, is specific to the Hilbert space formalism for quantum mechanics. The equivalence-based approach provides us with a more general, operational principle for defining contextuality. In this section, we discuss the structure and assumptions needed to derive various examples of contextuality from this principle. This has been done previously in the specific cases of the Kochen-Specker scenario in~\cite{kunjwalspekkens} and Spekkens' contextuality of 2-dimensional quantum systems in~\cite{spekkens}. Here, we define conditions under which the equivalence-based approach gives rise to any contextuality scenario that relies on the impossibility of a hidden variable model with deterministic outcome assignments. As an illustration, we discuss Bell's non-locality scenario and Mermin's all versus nothing argument from the equivalence-based point of view. Furthermore, we analyse the notion of equivalence-based contextuality for the class of theories where outcome-determinism cannot be justified.

\subsection*{Operational Theories}
Firstly, we recall the equivalence-based approach to contextuality ~\cite{spekkens},~\cite{liangspekkenswiseman}. Consider two sets, $P$ and $M$, of preparation procedures and measurement procedures, respectively. For each measurement $m \in M$ there is a set finite set $O^m$ of possible measurement outcomes. For each pair $(p,m) \in P \times M$, there exists a probability distribution $d_{p,m}: O^m \rightarrow [0,1]$ over the set of possible outcomes $O^{m}$. The value $d_{p,m}(k)$ should be understood as the probability of obtaining the outcome $k$ when a preparation $p$ is performed, followed by a measurement $m$. 
We write $D$ for the indexed set of probability distributions $\{d_{p,m}\}_{p \in P, m\in M}$.
An {\bf operational theory} is defined by a tuple $(P, M, D, O)$, where $O=\cup_m O^m$. In \cite{spekkens}, operational theories also contain a set of transformation procedures, but we will not consider these here. We can still account for any transformation followed by a measurement by considering it as a new measurement. However, we lose the significance of compositionality, which can result in additional statistical equivalences.

Preparations and measurements are {\bf statistically equivalent} when they are not distinguishable based on the measurement statistics in the operational theory. Let $p,p' \in P$ be preparations and let $m,m' \in M$ be measurements, this is expressed below. 

\begin{align*}
p &\sim p' &\Leftrightarrow \hspace{1cm}&d_{p,m} = d_{p',m}& \forall m \in M \\
m &\sim m' &\Leftrightarrow \hspace{1cm}&d_{p,m} = d_{p,m'} & \forall p \in P\\ 
(m,k) &\sim (m',k') &\Leftrightarrow \hspace{1cm}&d_{p,m}(k) = d_{p,m'}(k') & \forall p \in P
\end{align*}

\begin{example*}[Quantum Mechanics]
The standard example of an operational theory is the Hilbert space formalism for quantum mechanics.  Equivalence classes of preparation procedures correspond to  density matrices. Equivalence classes of measurement procedures correspond to POVM's. Different decompositions of density matrices or POVM's correspond to different preparation or measurement procedures, respectively. The outcome set of each measurement procedure corresponds to its POVM elements. The indexed set of distribution functions $D$ is derived from the Born rule.
\end{example*}

An {\bf ontological representation} of the operational theory $A=(P,M,D,O)$ consists of a discrete set of ontological values $\Omega$, together with sets of distribution functions $\mu=\{\mu_p: \Omega \rightarrow [0,1]\}_{p \in P}$ and $\xi=\{\xi_{m}(\lambda): O^m \rightarrow[0,1]\}_{\lambda \in \Omega, m \in M}$. The distribution functions are such that they realise the measurement statistics of $A$, which is expressed by the formula below.
 
\begin{equation}\label{def:or}
\sum_{\lambda \in \Omega} \xi_m(\lambda)(k) \mu_p(\lambda)= d_{p,m}(k) \hspace{2cm} \forall p \in P, m \in M
\end{equation}

An ontological representation, like a hidden variable model, should be thought of as representing a physical system as it really is, while the operational theory merely describes our knowledge of the system, which may not be accurate or complete. 

In this paper, we will assume that all ontological values can be obtained by performing some preparation procedure. This means that for all $\lambda \in \Omega$ there exists a preparation $p \in P$, such that $\mu_p(\lambda)>0$.

\begin{definition}
An {\bf ontological representation} is called {\bf preparation non-contextual} if $\mu_{p} = \mu_{p'}$ whenever $p \sim p'$; it is called {\bf measurement non-contextual} if $\xi_{k,m} = \xi_{k',m'}$ whenever $(m,k) \sim (m',k')$; it is called {\bf non-contextual} if it is preparation non-contextual as well as measurement non-contextual. An {\bf operational theory} is called {\bf non-contextual} whenever there exists a non-contextual ontological representation that realises the theory. When there exists no such ontological representation, and operational theory is called contextual.
\end{definition}

Contextuality is characterised as the impossibility of {\bf any} non-contextual hidden variable model. One would therefore want to consider non-discrete infinite spaces of ontological values as well. We conjecture that all results can be generalised to hold for topological measure spaces. We leave this for future work. 

Nevertheless, the notion of non-contextuality without any further restictions is too permissive. We can find a non-contextual ontological representation for any operational theory with discrete sets $P,M$. This covers all classic examples of contextuality mentioned so far. The ontological  representation is defined as follows:
\begin{align*}
\Omega:= \{[p]\}_{p \in P} \hspace{.5cm}
\mu_q([p]):= \delta_{p,q} \hspace{.5cm}  
\xi_m([p])(k):= d_{p,m}(k) 
\end{align*}

Here, $[p]$ is the statistical equivalence class of $p \in P$. 
One could ask wether it is possible to find a non-contextual ontological representation that satisfies specific properties. These can either be derived from the formalism or justified by physical principles. 
In the rest of this section, we discuss which additional properties we need to derive known examples of contextuality from the equivalence-based definition of contextuality.

\subsection*{Deriving Contextuality Arguments from Convex Structure}
The following assumption is required for the contextuality proof of preparations and unsharp measurements of 2-dimensional quantum systems given in~\cite{spekkens}. 

\begin{pos}[Preservation of Convexity]\label{pos:conv}
Let $p, p_1, p_2 \in P$ be preparation procedures, such that $p= c p_1 +(1-c) p_2$ is a convex combination of $p_1$ and $p_2$; let $m, m_1, m_2 \in M$ be measurement procedures, such that $m= c m_1 + (1-c) m_2$ is a convex combination of $m_1$ and $m_2$.
We have the following equalities of distribution functions: 
\begin{align}
\mu_{c p_1 + (1-c) p_2} &= c \mu_{p_1} + (1-c) \mu_{p_2} \\
\xi_{c m_1 + (1-c) m_2} &= c \xi_{m_1} + (1-c) \xi_{m_2}
\end{align}
\end{pos}

\begin{example*}
In any physical experiment, the convex combination of two operations can be seen as a probabilistic sample of these operations. 
In the Hilbert space formalism for quantum mechanics, it is the convex combination of a set of density matrices or observables, respectively.
\end{example*}

The contextuality examples arise from the fact that POVM's and mixed density matrices can be written as different convex combinations of projectors and pure density matrices, respectively.
It is not possible that the convex decompositions are preserved at the ontological level for all combinations at once. We will discuss the two examples in detail in section~\ref{sec:conv}.

\subsection*{Outcome-Determinism}
An ontological representation is {\bf outcome-deterministic} if $\xi_m(\lambda)(k) \in \{0,1\}$ for all $m\in M, k \in O$ and $ \lambda \in \Omega$. In~\cite{spekkens} it was shown that every non-contextual ontological representation for PVM's in quantum mechanics must be outcome-deterministic.  We generalise this argument below for operational theories.
To characterize a class of operational theories in which outcome-determinism can be justified, we generalise the notion of sharp measurements and the maximally mixed state in quantum mechanics to the operational notions below.

\begin{definition}
A measurement procedure $m$ with outcome set $O^m$ is {\bf perfectly predictable} if for all $k \in O^m$ there exists a preparation $p_k$, such that $d_{p_k,m}(k') = \delta_{k,k'}$.
\end{definition}

\begin{definition}
A preparation $p^{mix}$ is {\bf maximally mixed} if the following two conditions hold.
\begin{enumerate}
\item For every preparation procedure $p'$, $p^{mix}$ is statistically equivalent to some preparation procedure which is a convex combination of preparations containing $p'$. 
\item For every perfectly predictable measurement procedure $m$, $p^{mix}$ is statistically equivalent to some measurement procedure which is a convex combination of $p_k$ for $k \in O^m$.  
\end{enumerate}
\end{definition}

\begin{example*}[Quantum Mechanics]
In quantum theory, PVM's are perfectly predictable. In the Hilbert space model for quantum theory of a fixed dimension $d$,
any linear decomposition of the maximally mixed state $I/2^d$ is a maximally mixed preparation.
\end{example*}

\begin{lemma}\label{lem:od}
Let $m$ be a perfectly predictable measurement procedure in an operational theory that satisfies assumption~\ref{pos:conv} and contains a maximally mixed preparation. If the distribution function $\xi_m$ is part of a preparation non-contextual ontological representation, it is outcome-deterministic.
\end{lemma}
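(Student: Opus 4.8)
The plan is to combine three inputs: the perfect predictability of $m$, the preservation of convexity from Assumption~\ref{pos:conv}, and the two defining clauses of the maximally mixed preparation $p^{mix}$, together with the standing hypothesis that every ontological value lies in the support of some $\mu_p$.

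First I would use perfect predictability. For each outcome $k\in O^m$ pick a preparation $p_k$ with $d_{p_k,m}(k')=\delta_{k,k'}$ and substitute into the realisation identity~\eqref{def:or}. Taking $k'=k$ gives $\sum_{\lambda}\xi_m(\lambda)(k)\,\mu_{p_k}(\lambda)=1$; since $\xi_m(\lambda)(k)\le 1$ and $\sum_\lambda\mu_{p_k}(\lambda)=1$, the chain $1=\sum_\lambda \xi_m(\lambda)(k)\mu_{p_k}(\lambda)\le\sum_\lambda\mu_{p_k}(\lambda)=1$ is term-by-term equality, forcing $\xi_m(\lambda)(k)=1$ for every $\lambda\in\mathrm{supp}\,\mu_{p_k}$. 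Taking $k'\neq k$ gives $\sum_\lambda\xi_m(\lambda)(k')\mu_{p_k}(\lambda)=0$, and since all summands are nonnegative, $\xi_m(\lambda)(k')=0$ for every $\lambda\in\mathrm{supp}\,\mu_{p_k}$. Thus on $\mathrm{supp}\,\mu_{p_k}$ the function $\xi_m(\lambda)$ is the point mass at $k$, in particular $\{0,1\}$-valued.

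Next I would show $\bigcup_{k}\mathrm{supp}\,\mu_{p_k}=\Omega$. By the second clause defining $p^{mix}$, $p^{mix}$ is statistically equivalent to a convex combination $\sum_k c_k p_k$; preparation non-contextuality gives $\mu_{p^{mix}}=\mu_{\sum_k c_k p_k}$, and Assumption~\ref{pos:conv} then gives $\mu_{p^{mix}}=\sum_k c_k\mu_{p_k}$. By the first clause, for every preparation $p'$ we have $p^{mix}\sim c p'+(1-c)p''$ with $c>0$, so by the same two principles $\mu_{p^{mix}}=c\,\mu_{p'}+(1-c)\,\mu_{p''}\ge c\,\mu_{p'}$, whence $\mathrm{supp}\,\mu_{p'}\subseteq\mathrm{supp}\,\mu_{p^{mix}}$. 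Since every $\lambda\in\Omega$ lies in $\mathrm{supp}\,\mu_{p'}$ for some $p'$ by the standing hypothesis, $\mathrm{supp}\,\mu_{p^{mix}}=\Omega$; and because $\mu_{p^{mix}}=\sum_k c_k\mu_{p_k}$ is a sum of nonnegative terms, every $\lambda\in\Omega$ has $\mu_{p_k}(\lambda)>0$ for at least one $k$.

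Combining the two steps, every $\lambda\in\Omega$ lies in $\mathrm{supp}\,\mu_{p_k}$ for some $k$, and there $\xi_m(\lambda)$ is already $\{0,1\}$-valued; hence $\xi_m$ is outcome-deterministic. The step I expect to need the most care is the support argument: one must check that the convex-combination clauses of $p^{mix}$ genuinely descend to the ontological distributions — which is exactly where Assumption~\ref{pos:conv} together with preparation non-contextuality are indispensable — and one must really invoke the global assumption that no ontological value is invisible to every preparation, without which $\Omega$ could contain points on which $\xi_m$ is unconstrained. The remaining manipulations are routine applications of~\eqref{def:or}.
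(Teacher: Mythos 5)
Your proposal is correct and follows essentially the same route as the paper's own proof: perfect predictability pins $\xi_m$ to a point mass on each $\mathrm{supp}\,\mu_{p_k}$, and the two clauses of the maximally mixed preparation combined with preparation non-contextuality and Assumption~\ref{pos:conv} show that these supports exhaust $\Omega$. Your write-up is in fact somewhat more explicit than the paper's at the two delicate points (the term-by-term equality forcing $\xi_m(\lambda)(k)\in\{0,1\}$ on the supports, and the need for a strictly positive coefficient on $p'$ in the first clause of the maximally mixed preparation), but the underlying argument is identical.
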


\begin{proof}
Let $\Omega_p := \{\lambda \in \Omega | \mu_p(\lambda) > 0\}$ be the support of $\mu_p$. 
Perfect predictability implies that for each measurement $m$ and outcome $k$ of $m$, there exists a preparation $p_k$, such that \newline $d_{p_{k},m}(k') = \sum \xi_{m}(\lambda)(k')\mu_{p_{k}}(\lambda)= \delta_{k,k'}$ for all $k' \in O^m$.
If $k = k'$, this means that $\xi_{m}(\lambda)(k) = 1$ for all $\lambda \in \Omega_{p_k}$, since $\mu_{p_{k}}$ is a probability distribution and $\xi_m(\lambda)(k) \leq 1$. For $k \neq k'$, this implies that $\xi_{m}(\lambda)(k) = 0$ for $\lambda \in \Omega_{p_{k'}}$. As a result, $\xi_m(\lambda)(k) \in \{0,1\}$ for $\lambda \in \cup_k \Omega_{p_{k}}$.

It is left to prove that $\Omega = \cup_k \Omega_{p_k}$. By preparation non-contextuality, there is one probability distribution $\mu_{p^{mix}}$ for all preparations that are statistically equivalent to the maximally mixed preparation $p^{mix}$.  We use this to show that $\Omega = \Omega_{p^{mix}}$. It is clear that $\Omega_{p^{mix}} \subset \Omega$, so we need to show that for all $\lambda \in \Omega$,  it is true that $\lambda \in\Omega_{p^{mix}}$. We may assume that for every element $\lambda \in \Omega$ there exists a preparation $p$, such that $\lambda \subset \Omega_{p}$. Otherwise, it is impossible to ever obtain the value $\lambda$. By the first condition of the maximally mixed preparation and assumption~\ref{pos:conv},  we know that $\lambda \subset \Omega_{p^{mix}}$. It follows that $\Omega = \Omega_{p^{mix}}$. By the second condition and assumption~\ref{pos:conv}, we know that $\mu_{p^{mix}} = \mu_{\sum c_k p_k}$. It follows from the definition of the support, that $ \Omega_{p^{mix}} = \Omega_{\sum c_k p_k} =  \cup \Omega_{c_k {p_k}} = \cup { \Omega_{p_k}}$, since all $\mu_{p_k}$ are strictly positive functions.  It follows that $\Omega = \cup \Omega_{p_k}$, hence $\xi_m$ is outcome-deterministic.
\end{proof}
\subsection*{Beyond Outcome-Determinism}

Another setup which imposes additional restrictions on non-contextual ontological representations is that of joint measurements. We recall the definition given in ~\cite{liangspekkenswiseman}.

\begin{definition}\label{def:jointm}
A set of $N$ measurements $\{m_1, m_2,..., m_N\}$ is {\bf jointly measurable} if there exists a measurement $m$ with the following features: 
\\(i) The outcome set of $m$ is the Cartesian product of the outcome sets of $m_1,...,m_N$
\\(ii) Let $S$ be a subset of the index set $\{1,...,n\}$. The outcome distributions for every joint measurement of any subset $\{m_s|s \in S\} \subset \{m_1,...,m_N\}$ is recovered as the marginal of the outcome distribution of $m$ for all preparations $p \in P$. Denoting a joint measurement of the subset $S$ by $m_S$ with a corresponding section $k_S \in O^{m_S}$, the condition can be expressed as 
\begin{equation}
 \forall S, \forall p: d_{p,m_S}(k) = \sum_{k\in O^m: \pi_S(k)=k_S} d_{p,m}(k).
\end{equation}
Here, $\pi_S$ is the projection function on the subset $\mathcal{E}(m_S) \subset \mathcal{E}(m)$.
\\(iii) The composition of functions $\pi_S \circ m$ corresponds to a measurement in the operational theory for each subset $S\subset \{1,...,N\}$.
\end{definition}

Condition (iii) was not given in the original definition. As we will see,  with this additional condition, equivalence-based non-contextuality implies that any ontological representation of an operational theory with joint measurements is parameter independent. An ontological representation is called {\bf parameter independent} if for each measurement the effect on the ontological states is independent of any other measurement performed simultaneously. We can restrict the distribution function of a joint measurement $m$ to a subset $m'$ by the restriction function ${\xi_m}|_{m'}$, which is defined as ${\xi_m}|_{m'}(k')(\lambda) : = \sum_{k: \pi_{m'}(k) = k'} \xi_{m}(k)(\lambda)$, where $\pi_{m'}: O^m \rightarrow O^{m'}$ projects the outcomes of $m$ to the set of outcomes of $m'$. Parameter independence means that for two joint measurements $m,n$, the equality ${\xi_m}|_{m \cap n} = {\xi_n}|_{m \cap n}$ holds.  
\begin{lemma}\label{lem:nosig}
Any measurement non-contextual representation of an operational theory $(P, M, D, O)$ is parameter independent. That is, for all joint measurements $\{m,n\}$, we have the following equalities:
\begin{equation} \xi_{m|m\cap n}^{k}(\lambda) = \xi_{m\cap n}^k(\lambda) = \xi_{n|m\cap n}^{k}(\lambda)\end{equation}
\end{lemma}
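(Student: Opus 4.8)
The plan is to distil from the joint-measurability structure of Definition~\ref{def:jointm} enough operational equivalences between measurement--outcome pairs that the measurement non-contextuality hypothesis forces the three ontological distributions in the statement to coincide. Concretely, I would fix joint measurements $m$ and $n$ in the sense of Definition~\ref{def:jointm} and write $m\cap n$ for the joint measurement of the sub-measurements common to both, with outcome set $O^{m\cap n}$ and projections $\pi_{m\cap n}\colon O^m\to O^{m\cap n}$ and $\pi_{m\cap n}\colon O^n\to O^{m\cap n}$. By condition~(iii) of Definition~\ref{def:jointm} the post-processed procedures $m|m\cap n:=\pi_{m\cap n}\circ m$ and $n|m\cap n:=\pi_{m\cap n}\circ n$ are themselves measurements of the operational theory, as is $m\cap n$, so the statistical equivalence $\sim$ of pairs and the non-contextuality hypothesis may legitimately be applied to them. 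Since $m|m\cap n$ is the procedure ``run $m$ and record only the coordinates kept by $\pi_{m\cap n}$'', the distribution function the representation attaches to it is the fibre-sum ${\xi_m}|_{m\cap n}$ from the definition of parameter independence, and symmetrically ${\xi_n}|_{m\cap n}$ for $n$; I would make this explicit by marginalising~\eqref{def:or} for $m$ over the fibres of $\pi_{m\cap n}$, which shows that ${\xi_m}|_{m\cap n}$ realises the statistics of $m\cap n$.

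Next, condition~(ii) of Definition~\ref{def:jointm}, applied to $m$ and then to $n$, gives that for every $p\in P$ and every $k\in O^{m\cap n}$ the marginal of $d_{p,m}$ over the fibre of $\pi_{m\cap n}$ above $k$ and the corresponding marginal of $d_{p,n}$ both equal $d_{p,m\cap n}(k)$. Hence $d_{p,m|m\cap n}(k)=d_{p,m\cap n}(k)=d_{p,n|m\cap n}(k)$ for all $p$ and all $k$, i.e.\ the measurement--outcome pairs $(m|m\cap n,k)$, $(m\cap n,k)$ and $(n|m\cap n,k)$ are pairwise statistically equivalent. Feeding these equivalences into measurement non-contextuality of the representation yields $\xi^{k}_{m|m\cap n}=\xi^{k}_{m\cap n}=\xi^{k}_{n|m\cap n}$ as functions on $\Omega$, for every $k$; under the identification of the first paragraph this is exactly the asserted chain $\xi^{k}_{m|m\cap n}(\lambda)=\xi^{k}_{m\cap n}(\lambda)=\xi^{k}_{n|m\cap n}(\lambda)$ (equivalently ${\xi_m}|_{m\cap n}=\xi_{m\cap n}={\xi_n}|_{m\cap n}$ as distribution functions), which is parameter independence.

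The step I expect to demand the most care is precisely this identification of the ontological distribution the representation assigns to the post-processed procedure $\pi_{m\cap n}\circ m$ with the fibre-sum ${\xi_m}|_{m\cap n}$: a priori~\eqref{def:or} only constrains a distribution realising the correct statistics, and does not single it out when the family $\{\mu_p\}_{p\in P}$ fails to separate the points of $\Omega$. The clean way around this is to build into the notion of an ontological representation, or to read off from condition~(iii), that outcome coarsening is respected, so that the representation assigns ${\xi_m}|_{m\cap n}$ to $\pi_{m\cap n}\circ m$ by construction; with that convention the argument above goes through verbatim. The remaining work is the routine bookkeeping needed to check that $m\cap n$, $m|m\cap n$ and $n|m\cap n$ genuinely lie in $M$ and hence fall under conditions~(ii) and~(iii) of Definition~\ref{def:jointm} and the non-contextuality hypothesis.
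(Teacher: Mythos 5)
Your proposal follows essentially the same route as the paper's proof: use condition~(ii) of Definition~\ref{def:jointm} to establish the operational equality $d_{p,m_s}(k_s)=d_{p,\pi_s\circ m}(k_s)$, invoke condition~(iii) to certify that the coarse-grained procedure is a genuine measurement to which measurement non-contextuality applies, and conclude that the corresponding $\xi$'s coincide. The subtlety you flag --- that the representation must assign the fibre-sum ${\xi_m}|_{m\cap n}$ to the post-processed measurement $\pi_{m\cap n}\circ m$ --- is also left implicit in the paper's proof, so your treatment is, if anything, slightly more careful on the same argument.
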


\begin{proof}
Let $m=\{m_1,...,m_N\}$ be a jointly measurable set of measurement procedures of $M$, let $p$ be a preparation procedure, let $K_s$ be the set $\{k \in O^m | \pi_s(k) = k_s\}$, for some $s \in \{1,...,N\}, k_s \in O^{m_s}$. 

By joint measurability and basic probability theory, we have the following sequence of equalities on the operational level:

\begin{align*}
d_{m_s,p}(k_s) &= \sum_{k \in K_s} d_{m,p}(k) \\
& = d_{m,p}(K_s)\\
&= d_{\pi_s \circ m,p}(k_s)
\end{align*}

By condition $(iii)$ of Definition~\ref{def:jointm}, $\pi_S \circ m$ is a well-defined measurement. It then follows from measurement non-contextuality that $\xi_{m_s}(k_s)= \xi_{\pi_s \circ m}(k_s)$, which implies

\begin{equation}
\xi_{m_S}(k_s) = \sum_{k| \pi_s(k)=k_s} \xi_m(k) =: \xi_{m|m_s}(k_s)
\end{equation}
\end{proof}

A stronger restriction on ontological representations is factorizability. We call an ontological model {\bf factorizable} when for joint measurements $m=(m_1,...,m_n)$, we can write \newline $\xi_m(\lambda)(o) = \prod_{i=1,...,n} \xi_{m_i}(\lambda)(\pi_i(o))$. It was shown in Theorem 6 of~\cite{liangspekkenswiseman} that any ontological model which is outcome-deterministic and measurement non-contextual, is factorizable. In fact, we will show in Section~\ref{sec:uni} that an operational theory admits a factorizable non-contextual ontological representation iff it admits a deterministic non-contextual ontological representation. This is a generalisation of Fine's theorem~\cite{fine} for the equivalence-based notion of contextuality.

\subsection*{Bell's Scenario}\label{sec:Bell} 
Consider the following experiment where two parties can each choose from two different measurements with outcome set $\{1,-1\}$: $a$ and $a'$ for the first party; $b$ and $b'$ for the second party. The outcome statistics of each possible combination of measurements after a preparation $p$ is organised in the table below. Each entry $a_{i,j}$ of the table represents the probability of obtaining outcome $i$ for measurement $j$.

\begin{align}
\begin{tabular}{ c | c  c c c }
  \hspace{.1cm} & (1,1) & (-1,1) & (1,-1) & (-1,-1)\\
  \hline			
  $(a,b)$ & 1/2 & 0 & 0 & 1/2\\
  $(a',b)$ & 3/8 & 1/8 & 1/8 & 3/8\\
  $(a,b')$ & 3/8 & 1/8 & 1/8 & 3/8\\
  $(a',b')$ & 1/8 & 3/8 & 3/8 & 1/8\\
\end{tabular}
\end{align}

This setup can be realised in the operational theory given by the Hilbert space formalism of quantum mechanics. Consider the quantum states

\begin{align*}
&\phi_{a_+}, \phi_{b_+} &= &(\ket{0} + \ket{1}) / \sqrt{2})  \hspace{1cm}
&\phi_{a_-}, \phi_{b_-} &= &(\ket{0} - \ket{1})\ / \sqrt{2})\\
&\phi_{a'_+}, \phi_{b'_+} &= &(\ket{0} + e^{\pi/3 i}\ket{1}) / \sqrt{2}) \hspace{1cm}
&\phi_{a'_-}, \phi_{b'_-} &= &(\ket{0} + e^{(\pi/3 + \pi) i}\ket{1}) / \sqrt{2})\\
&\phi_{GHZ} &= &(\ket{00}+\ket{11})/ \sqrt{2} 
\end{align*}

and observables $P_{x_+}$ and $P_{x_-}$ that project onto the state $\phi_{x_+}$  and $\phi_{x_-}$, respectively; as well as the measurement $x:= (P_{x_+}, P_{x_-})$ for $x \in \{a,b,a',b'\}$. 

We are interested in measurement procedures $(a,b)$, $(a',b)$, $(a,b')$, and $(a',b')$ corresponding to the PVM's $a \otimes b, a' \otimes b, a \otimes b'$ and $a' \otimes b'$, respectively, as well as the restriction to each of the components for each of the measurements, which are obtained by taking the partial trace.

We have seen before that the Hilbert space formalism of fixed dimension has a well-defined notion of convex combinations and contains a maximally mixed preparation. Furthermore, all measurements are perfectly predictable as they are PVM's. The tuples are joint measurements of their restrictions, which fall into four equivalence classes $a,a',b,b'$. Since any non-contextual ontological representation is in particular preparation non-contextual, it follows that any such representation must be outcome-deterministic under assumption~\ref{pos:conv}, hence factorizable.  As a result, every ontological state can be associated with a function $\{a,a',b,b'\} \rightarrow \{0,1\}$ from the set of equivalence classes of elementary measurements to the outcome set. This morphism maps each measurement to the outcome that occurs with probability 1 when the system is in ontological state $\lambda$. In other words, we can identify each ontological value $\lambda$ with the outcome $(\alpha, \beta, \alpha', \beta') \in \{0,1\}^4$. We obtain the presupposed probability that $a=\alpha, b=\beta, a'=\alpha', b'=\beta'$ after the preparation of $\phi_{GHZ}$ by integrating  according to equation~\ref{def:or} over all ontological values that correspond to this outcome. We denote this probability by $p_{\alpha \beta \alpha' \beta'}$. These probabilities should sum up to the values given in the table. The entries $a_{1,1}, a_{2,2}, a_{3,3}$, and $a_{1,4}$, give us the 4 equations below.

\begin{align*}
&a_{1,1}: p_{0000} + p_{0010} + p_{0001} + p_{0011} = 1/2 \hspace{.5 cm}
&a_{2,2}: p_{0010} + p_{1010} + p_{0011} + p_{1011} = 1/8\\
&a_{3,3}: p_{0001} + p_{0101} + p_{0011} + p_{0111} = 1/8 \hspace{.5 cm}
&a_{1,4}: p_{0000} + p_{0100} + p_{1000} + p_{1100} = 1/8
\end{align*}
The left-hand-side of the sum of $a_{2,2}, a_{3,3}$ and $a_{1,4}$ should be greater than $1/2$, since it contains all summands of $a_{1.1}$. However, the right-hand side of these equations sums to $3/8$. As a result, the equations cannot be satisfied. As a consequence of this contradiction, a non-contextual ontological representation cannot exist.

\subsection*{Mermin's All Versus Nothing Argument}\label{sec:Mermin}
For the next example, we again consider the operational theory defined by the Hilbert space formalism of quantum mechanics. Suppose that we are given a GHZ state $\phi_{GHZ} = (\ket{000} +  \ket{111})/\sqrt{2}$ and we may perform Pauli $X$ or $Y$ measurements on each of its components. One can verify that for a choice of joint measurements, the following equalities hold with certainty. The right-hand-side of the equalities are given by the product of the outcomes of the three individual measurements: 

\begin{align*}
&X_1Y_2Y_3 &= -1 \hspace{1cm} 
&Y_1Y_2X_3 &= -1 \hspace{1cm}
&Y_1X_2Y_3 &= -1 \hspace{1cm}
&X_1X_2X_3 &= 1
\end{align*}

As in the previous example, the measurements are perfectly predictable, and the operational theory contains a maximally mixed state. Furthermore, the triples are joint measurements of their restrictions to the three different components, given by the partial trace. These restrictions fall into the equivalence classes $X_1, Y_1, X_2, Y_2, X_3, Y_3$.
We will show that no ontic state $\lambda$ in a non-contextual representation allows for probability distributions $\mu_{-}(\lambda)$ that are consistent with this scenario.
Suppose that there exists a non-contextual ontological representation for this operational theory. In particular, this representation is preparation non-contextual. Perfect predictability and the maximally mixed preparation imply outcome determinism and factorizability.  Hence, $\mu_{X_1Y_2Y_3} = \mu_{X_1} \mu_{Y_2} \mu_{Y_3}$, and similarly for the other joint measurements.
Given any ontological state $\lambda$ of such representation, we can identify each of the measurements $X_i, Y_i$ with the outcome that occurs with certainty for $\mu_{X_i}(\lambda)$ and $\mu_{Y_i}(\lambda)$, respectively. By factorizability, the outcomes of the joint measurements correspond to the product of the outcomes of the three components. In other words, $\mu$ assigns $-1$ or $1$ to each $X_i, Y_i$, in a way that the equalities above are satisfied. It is easy to see that this is impossible: The product of the expressions on the left-hand-side must equal 1, since every measurement occurs twice, while the product of the right-hand-sides equals -1.

\section{The Sheaf Approach}\label{sec:emp}
We recall the sheaf-theoretic approach to contextuality and non-locality, which was introduced by Abramsky and Brandenburger in~\cite{abramskybrandenburger} and ~\cite{abramskybrandenburger2}. We consider the version of the formalism given in~\cite{abramskysadrzadeh} and extend it to incorporate preparations. As we will show later, an extension of this version is applicable to any operational theory.

 Sheaves are a mathematical tool for describing how local data can be combined to obtain global information about a system. In this setting, a {\bf system type} consists of a discrete set $X$ of measurement labels, together with a {\bf measurement cover} $\cM = \{C_i\}_{i \in I}$. This is an antichain of subsets $C_i \subset X$, such that $\cup_{i \in I} C_i = X$. This means that for $C, C' \in \mathcal{M}$, we have the implication $C \subset C' \implies C=C'$. The measurement cover $\cM$ represents the maximal sets of measurements that can be performed jointly. We write $\downarrow \cM_A$ for the simplicial complex generated by $\cM$. 

We shall fix a set $O$ of outcomes, which is the union of the sets of possible outcomes for each of the measurements in $X$. For each set of measurements $U\subset X$, a {\bf section over $U$} is a function $U \rightarrow O$. We write $O^U$ for the set of sections over $U$. The assignment $U \mapsto O^U$ defines a sheaf over the discrete topological space $\cE: \mathcal{P}(X) \rightarrow Set$, which we call the {\bf sheaf of events}.  The restriction function, which is the remaining part of the data defining this sheaf, is given below.
\[ \rho^U_{U'}:= \cE(U \subset U'): O^{U'} \mapsto O^U :: s \rightarrow s|_U\]
We call elements of $\cE(X)$ {\bf global sections} of measurement outcomes. Each global section consists of an assignment of an outcome to each of the measurements.

For any commutative semiring R and set X, an $R$-distribution $d$ on $X$ is a map $d : X \rightarrow R$ of finite support, such that 
 
\[ \sum_{x \in X} d(x) = 1\]

We write $\mathcal{D}_R(X)$ for the set of $R$-distributions on $X$. For a function of sets $f: X \rightarrow Y$, we define
\[ \mathcal{D}_R(f):\mathcal{D}_R(X) \rightarrow \mathcal{D}_R(Y) :: d \mapsto [ y \mapsto \sum_{f(x)=y}d(x)]\]
It is easy to see that $\mathcal{D}_R$ is functorial. Hence, we can compose $\mathcal{E}$ with $\mathcal{D}_R$ to obtain a presheaf $\mathcal{D}_R\mathcal{E}: \mathcal{P}(X)^{op} \rightarrow Set$, which maps each set of measurements to the set of $R$-distributions over their sections. When $R$ is the ring of non-negative reals $\mathbb{R}_+$,  $\mathcal{D}_R\mathcal{E}(m)$ corresponds to probability distributions over the outcomes of $m$; if $R$ is the ring of booleans $\mathbb{B}$, it represents the possibility of outcomes of $m$. 
 
The approach can be generalised to a presheaf over a small, thin category 
 $D_{\R}\cE: {\bf C} \rightarrow Set$, as in~\cite{abramskysadrzadeh}.  A category is called thin when for each two objects $A,B$ and each two morphisms $f,g: A\rightarrow B$, we have the equality $f=g$. This is the categorical way to characterise a preorder. The order relation is given by a notion of joint measurement.
Depending on the interpretation of an empirical theory, one could adopt different notions of joint measurement. In this paper, we will use the notion of joint measurability given in Definition~\ref{def:jointm}. 
We recover the set of measurement labels $X$ as the set of objects that are jointly measurable sets of at most one element.  These are all objects $A$, such that there is an arrow $A \rightarrow B$ to each object $B$, for which there exists an arrow $B \rightarrow A$. The measurement cover $\mathcal{M}$ corresponds to the maximal jointly measurable sets. This cover contains those objects $A$, such that there is an arrow $B \rightarrow A$ from every object $B$, for which there exists an arrow $A \rightarrow B$. Note that when the thin category is a poset, we obtain the usual notions of measurement labels and a measurement cover.

A {\bf state} for a system type ${\bf C}$ determines a distribution $\sigma_C \in \mathcal{D}_R\mathcal{E}(C)$ for each measurement context $C \in \mathcal{M}$. A state is called {\bf no-signalling} when for all $C,C'\in \mathcal{M}$
\[ \sigma_C|_{C \cap C'} = \sigma_{C'}|_{C \cap C'}\]
When a state $\sigma$ is no-signalling, the restriction $\sigma_{C|_m}$ for $m \in Ob({\bf C})$ corresponds to the same probability distribution over the outcomes of $m$ for each $C \in \cM$. We will denote this distribution by $\sigma_m$. There may be several states corresponding to the same distribution. These are {\bf statistically equivalent states}. We call the tuple $({\bf C}, S, O)$, where $S$ is a collection of states for a system type ${\bf C}$, an {\bf empirical theory}. An empirical theory is no-signalling if all states in $S$ are no-signalling.
A {\bf global section} for the presheaf $\mathcal{D}_R\mathcal{E}$ is given by an indexed set $d = \{d^{\sigma}\}_{\sigma \in S}$, such that for each $\sigma \in S$, $d^{\sigma} \in \mathcal{D}_R\mathcal{E}(X)$ and we have the following equality
\begin{center}
$d^{\sigma}_{m} := d^{\sigma}_X|_m = \sigma_m$
\end{center}

We define {\bf contextuality of an empirical theory} as the non-existence of a global section for the presheaf $\mathcal{D}_R\mathcal{E}$.

\begin{example*}[Quantum Mechanics]
The Hilbert space model for quantum mechanics gives rise to an empirical theory. The measurement labels correspond to POVM's; the measurement cover consists of maximal sets of joint measurement as in Definition~\ref{def:jointm}; and states are given by density matrices, which determine the corresponding families of probability distributions according to the Born rule. 
\end{example*}

\subsection*{Convexity in Empirical Theories}\label{sec:conv}
Many of the classic contextuality results about quantum mechanics, including Kochen-Specker scenarios~\cite{KS}, Hardy's paradox~\cite{hardy}, Bell's scenario~\cite{bell} and Mermin's all versus nothing argument~\cite{mermin} can be derived from the sheaf approach, as shown in~\cite{abramskybrandenburger} and \cite{asklm}.
In this section, we show that the same is true for the contextuality of preparations and unsharp measurements~\cite{spekkens}. Therefore, these contextuality arguments are independent from the chosen notion of contextuality.
We will demonstrate that the arguments are a direct consequence of assumption~\ref{pos:conv}.

As assumption~\ref{pos:conv} is formulated in terms of operational theories and ontological representations, we introduce an analogue for empirical theories. In this setting, global sections can be seen as the counterpart of the non-contextual ontological representations. We explain this in Section~\ref{sec:canhidvar}.

\begin{pos}[Preservation of convexity]\label{pos:convemp}
Let $d$ be a global section for an empirical theory. Let $\sigma^{p_1}, \sigma^{p_2}, \sigma^{c_1 \cdot p_1 + c_2 \cdot p_2} \in S$ be states, where $\sigma^{c_1 \cdot p_1 + c_2 \cdot p_2}:= c_1 \cdot \sigma^{p_1} + c_2 \cdot \sigma^{p_2}$ is the formal convex combination of $\sigma^{p_1}$ and $\sigma^{p_2}$. Let $m_1$, $m_2, c_1 \cdot m_1 + c_2 \cdot m_2 \in X$ be measurement labels, where $c_1 \cdot m_1 + c_2 \cdot m_2$ is the convex combination of $m_1$ and $m_2$. The equalities below hold.

\begin{align}
d^{\sigma_{c_1 \cdot p_1 + c_2 \cdot p_2}} &= c_1 \cdot d^{\sigma_{p_1}} + c_2 \cdot d^{\sigma_{p_2}} \\
d^{\sigma}_{c_1 \cdot m_1 + c_2 \cdot m_2} &= c_1 \cdot d^{\sigma}_{m_1} + c_2 \cdot d^{\sigma}_{m_2} \hspace{2cm} \forall \sigma \in S
\end{align}
\end{pos}

Note that $\sigma^{c_1 \cdot p_1 + c_2 \cdot p_2}$ is a formal convex combinations of states $\sigma^{p_1}$ and $\sigma^{p_2}$, not the convex combinations of their corresponding probability distributions.

\begin{lemma}\label{lem:convemp}
Let $({\bf C}, S)$ be an empirical model with a global section $d$. Assumption \ref{pos:convemp} implies that convexity in an empirical model is preserved by the probability distributions defined by states.

\begin{align*}
\sigma_C^{c_1p_1 + c_2p_2} &= c_1 \cdot \sigma_C^{p_1} + c_2 \cdot \sigma_C^{p_2} \hspace{1cm} \forall C \in \cM \\
\sigma_{c_1 m_1 + c_2 m_2} &= c_1 \cdot \sigma_{m_1} + c_2 \cdot \sigma_{m_2} 
\end{align*}
\end{lemma}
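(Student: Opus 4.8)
The plan is to unwind the definitions so that Lemma~\ref{lem:convemp} becomes a direct corollary of Assumption~\ref{pos:convemp} together with the defining property of a global section. The key observation is that for a no-signalling empirical theory, the distribution $\sigma_m$ attached to a measurement label $m$ is recovered from any global section as the marginal $d^\sigma_X|_m$, and more generally $\sigma_C = d^\sigma_X|_C$ for each context $C \in \cM$. So the probability distributions defined by states are fully determined by the global section, and the convex-combination identities we want to prove are just the restrictions of the convex-combination identities that Assumption~\ref{pos:convemp} already postulates at the level of $d$.

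Concretely, for the preparation part, I would start from the identity $d^{\sigma_{c_1 p_1 + c_2 p_2}} = c_1 \cdot d^{\sigma_{p_1}} + c_2 \cdot d^{\sigma_{p_2}}$ in $\mathcal{D}_\R\cE(X)$ given by Assumption~\ref{pos:convemp}, and then apply the restriction map $\rho^X_C = \mathcal{D}_\R\cE(C \subset X)$ to both sides. Since $\mathcal{D}_\R$ is functorial and the maps it produces are linear in the formal-convex-combination sense (marginalisation commutes with convex combinations of distributions), restriction distributes over the convex sum, giving
\begin{align*}
\sigma_C^{c_1 p_1 + c_2 p_2} = d^{\sigma_{c_1 p_1 + c_2 p_2}}_X|_C = \left( c_1 \cdot d^{\sigma_{p_1}}_X + c_2 \cdot d^{\sigma_{p_2}}_X \right)\Big|_C = c_1 \cdot \sigma_C^{p_1} + c_2 \cdot \sigma_C^{p_2}.
\end{align*}
For the measurement part, I would instead fix a state $\sigma$ and use the second equality of Assumption~\ref{pos:convemp}, namely $d^\sigma_{c_1 m_1 + c_2 m_2} = c_1 \cdot d^\sigma_{m_1} + c_2 \cdot d^\sigma_{m_2}$, together with the global-section condition $d^\sigma_m = \sigma_m$, to conclude $\sigma_{c_1 m_1 + c_2 m_2} = c_1 \cdot \sigma_{m_1} + c_2 \cdot \sigma_{m_2}$ directly.

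The only real subtlety — and hence the step I expect to need the most care — is checking that marginalisation (the action of $\mathcal{D}_\R\cE$ on an inclusion of measurement sets, or the restriction $\downarrow$ to a component $m$) genuinely commutes with formal convex combinations of distributions, i.e. that $\mathcal{D}_\R\cE(f)(c_1 d_1 + c_2 d_2) = c_1 \mathcal{D}_\R\cE(f)(d_1) + c_2 \mathcal{D}_\R\cE(f)(d_2)$. This is immediate from the pushforward formula $d \mapsto [\,y \mapsto \sum_{f(x)=y} d(x)\,]$, since finite sums commute, but it is worth stating explicitly because Assumption~\ref{pos:convemp} is phrased in terms of formal convex combinations of states rather than of the underlying distributions, and the content of the lemma is precisely that the two notions agree after passing through a global section. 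One should also note that well-definedness of $\sigma_C^{c_1 p_1 + c_2 p_2}$ as an actual state presupposes that the formal combination $\sigma^{c_1 p_1 + c_2 p_2}$ lies in $S$, which is exactly the hypothesis built into Assumption~\ref{pos:convemp}; with that in hand the argument is a two-line functoriality computation in each case.
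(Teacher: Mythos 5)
Your proof is correct and is essentially the paper's argument spelled out in full: the paper's own proof simply states that the lemma follows immediately from the definition of a global section, and your unwinding — restricting the convex-combination identities of Assumption~\ref{pos:convemp} along the (linear) marginalisation maps of $\mathcal{D}_\R\cE$ and invoking $d^\sigma_m = \sigma_m$ — is exactly that observation made explicit.
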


\begin{proof}
This follows immediately from the definition of a global section.
\end{proof}

The probability distributions in the empirical model of quantum mechanics do not preserve the convexity of the states and measurements. This is due to the fact that  the Born rule does not preserve convexity of POVM's or density matrices. %As a result, probability distributions arising from a convex combination POVM's or density matrices via the Born rule are not equal to the convex sum of the probability distributions arising from the individual POVM's or density matrices, respectively. 
We show this below by considering different decompositions of the maximally mixed state and of the maximally mixed POVM; hence, we prove sheaf-theoretic contextuality for preparations and unsharp measurements in 2-dimensional quantum mechanics.

\subsection*{Contextuality for Preparations}\label{sec:conv}

Consider the following set of states in the Hilbert space formalism of quantum mechanics: 

\begin{align*}
&\psi_a &= &(1,0)\hspace{1cm}
&\psi_b &= &(1/2, \sqrt{3}/2)\hspace{1cm}
&\psi_c &= &(1/2, -\sqrt{3}/2)\\
&\psi_A &= &(0,1)\hspace{1cm}
&\psi_B &= &(\sqrt{3}/2, -1/2)\hspace{1cm}
&\psi_C &= &(\sqrt{3}/2, 1/2)
\end{align*} 

We define the empirical theory below, where each $P_x$ is the measurement label that corresponds to the projection onto the quantum state $\phi_x$. Furthermore, $\sigma^{\phi_x}$ is the state in the empirical model that corresponds to the quantum state $\phi_x$. The state $\sigma^{\phi_{mix}}$ corresponds to the maximally mixed state $\phi_{mix} = (1/\sqrt{2}, 1/\sqrt{2})$.

\begin{align*}
&X = \{ P_a, P_A, P_b, P_B, P_c, P_C\} \hspace{.5cm}
&\mathcal{M} = \{\{P_a, P_A\}, \{P_b, P_B\}, \{P_c, P_C\}\}\\
&S = \{ \sigma^{\phi_a}, \sigma^{\phi_A}, \sigma^{\phi_b}, \sigma^{\phi_B}, \sigma^{\phi_c}, \sigma^{\phi_C}, \sigma^{\phi_{mix}}\}\hspace{.5cm}
&O =\{0,1\}
\end{align*} 

The outcome set $O$ indicates if the outcome corresponding to the projector of the POVM element occurs (1), or if it does not (0).  For instance, for the section $s: P_a \mapsto 0$, $\sigma^{\phi_a}_{P_a}(s) = 0$, $\sigma^{\phi_b}_{P_a}(s) = \frac{1}{4}$, and $\sigma^{\phi_c}_{P_a}(s) = \frac{1}{4}$. 

The linear combination of density matrices $\frac{1}{3} \phi_a + \frac{1}{3} \phi_b + \frac{1}{3} \phi_c$ is equal to the maximally mixed state $\phi_{mix}$ for any observable $P_x$. Suppose that there exists a global section $d$, by assumption~\ref{pos:convemp} and Lemma~\ref{lem:convemp}, this gives us the following equality.

\begin{align*}
\frac{1}{2}
= \frac{1}{3} \sigma^{\phi_a}_{P_a} + \frac{1}{3} \sigma^{\phi_b}_{P_a} + \frac{1}{3} \sigma^{\phi_c}_{P_a}
\end{align*}

It is easy to see that this cannot hold for any section. Working out the outcome probabilities for $P_a \mapsto 0$ gives us the contradiction below. 

\begin{align*}
\frac{1}{2}&= \frac{1}{3} \cdot 0 + \frac{1}{3} \cdot \frac{1}{4} + \frac{1}{3} \cdot \frac{1}{4}
\end{align*}
%\subsubsection{Contextuality of a qubit by POVM's}\label{ex:contqubit}
%We will discuss the contextuality argument for single qubits by Namakura, based on POVM's.
%He describes three POVM's on a single qubit, where the effects $\epsilon_i:= \frac{1}{2} \ket{\phi_i}\bra{\phi_i}$ correspond to the vertices of a regular hexagon on the bloch sphere:

%\begin{align*}
%\{\epsilon_{A_+}, \epsilon_{A_-}, \epsilon_{B_+}, \epsilon_{B_-}\}\\
%\{\epsilon_{A_+}, \epsilon_{A_-}, \epsilon_{C_+}, \epsilon_{C_-}\}\\
%\{\epsilon_{B_+}, \epsilon_{B_-}, \epsilon_{C_+}, \epsilon_{C_-}\}
%\end{align*}

%It is easy to see that this corresponds to a presheaf that, under suitable restrictions, does not admit a global section. Let the set of measurement labels be given by the effects $\epsilon_{i}$, the outcome set is $\{0,1\}$, where the outcome $1$ indicated that the effect is observed. The measurement cover is given by the three POVM's. By definition of POVM's, the only physically realisable global sections assign a 1 to exactly one of the effects in each of the POVM's. It is obvious that such a global section does not exist.

\subsection*{Contextuality for Unsharp Measurements} 
Consider the following empirical theory 

\begin{align*}
&X = \{ P_a, P_A, P_b, P_B, P_c, P_C, P_{abc}, P_{ABC}\} \hspace{.5cm}\\
&\mathcal{M} = \{\{P_a, P_A\}, \{P_b, P_B\}, \{P_c, P_C\}, \{P_{abc}, P_{ABC}\}\}\\
&S = \{ \sigma^a, \sigma^A, \sigma^b, \sigma^B, \sigma^c, \sigma^C\}\hspace{.5cm}\\
&O =\{0,1\}
\end{align*} 

The measurement label $P_{abc}$ is the convex combination $\frac{1}{3} P_a + \frac{1}{3} P_b + \frac{1}{3} P_c$, $P_{ABC}$ is defined similarly, and the other elements are as defined in section~\ref{sec:conv}. In quantum theory, this is the uniform sample over the respective projectors. This gives us the measurement context $\{P_{abc},P_{ABC}\} = \{\frac{1}{2}, \frac{1}{2}\}$. 
Suppose that this scenario has a global section. By assumption~\ref{pos:convemp} and Lemma~\ref{lem:convemp}, we have the  equalities below for any state~$\sigma$.

\begin{align}
&\frac{1}{2} = \sigma_{P_{abc}} =  1/3 \sigma_{P_a}+ 1/3 \sigma_{P_b} + 1/3 \sigma_{P_c}\\
&\frac{1}{2} = \sigma_{P_{ABC}} = 1/3 \sigma_{P_A} + 1/3 \sigma{P_B} + 1/3 \sigma{P_C}
\end{align}

It is easy to see that this does not hold for the given states. For example, if we take the state $\sigma^a$, the convexity condition together with the Born rule give us $\sigma^a_{p_{abc}}(1) = \sigma^a_{P_{ABC}}(0) = \frac{3 + \sqrt{3}}{6}$ and $\sigma^a_{P_{ABC}}(1) = \sigma^a_{P_{abc}}(0) = \frac{1 + \sqrt{3}}{6}$. This contradicts the outcome statistics of $\{p_{abc}, p_{ABC}\}$, which assign equal probability to each outcome for any measurement.  

\section{Unifying Approaches}\label{sec:uni}
In this section, we will explore the relation between empirical theories, operational theories, and ontological representations.  We establish a link between general equivalence-based contextuality and sheaf-theoretic contextuality. In other words, between theories that admit no ontological representation that is preparation non-contextual and measurement non-contextual according to equivalence-based formalism and theories that admit no global section $d \in \mathcal{D}_{\mathbb{R}}(\mathcal{E}(X))$ in the sheaf-theoretic formalism, respectively.

Any no-signalling empirical theory $A=({\bf C}_A, S_A, O_A)$ corresponds to an operational theory $Op(A)= (P_{Op(A)}, M_{Op(A)}, D_{Op(A)}, O_A)$ in the sense that the two theories describe the same experimental setting. The elements of the operational theory are defined below.
\begin{align*}
P_{Op(A)}&:= S_{A} \\
M_{Op(A)}&:= Ob({\bf C}_A) \\
d_{m, \sigma}(k)&:=\sigma_{m}(s) \hspace{2cm} \mbox{ for } d_{m,\sigma} \in D_{Op(A)} \mbox{ and } s(m)=k
\end{align*}
Conversely, every set of preparation procedures $P$ together with the set distributions $D$ give rise to a set of states $S$; every set of measurement procedures $M$ gives rise to a set of measurement labels $X$ consisting of 1-element joint measurements; and the preorder defined by joint measurements in $M$ gives rise to the thin category ${\bf C}_A$.

\begin{remark}\label{rem:sign}
Signalling empirical theories cannot be described as an operational theory. The reason is that while the 'same' measurement can have different outcome statistics in the empirical theory, depending on the context, this is not possible in an operational theory. A way to get around this is by treating restrictions of a context to a measurement as elementary measurements. 

\end{remark}
We will show that every sheaf-theoretic non-contextual empirical theory $A$ gives rise to a non-contextual ontological representation for $Op(A)$. We will call this a canonical ontological representation for the operational theory. Finally, we prove the theorem below, which connects sheaf-theoretic contextuality to equivalence-based contextuality. This theorem generalises the result in~\cite{abramskybrandenburger}, as well as Fine's Theorem~\cite{fine}, to the more general setup of sheaf-theoretic contextuality in empirical theories and equivalence-based contextuality in operational theories.

\begin{theorem}\label{thm:eq}
The following statements are equivalent for any no-signalling empirical theory $A$ and its corresponding operational theory $Op(A)$ 
\begin{enumerate}
\item The empirical theory $A$ admits a global section 
\item The operational theory $Op(A)$ admits a canonical non-contextual ontological representation
\item  The operational theory $Op(A)$ admits a factorizable non-contextual ontological representation
\end{enumerate}

\end{theorem}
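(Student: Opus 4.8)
The plan is to prove the cycle of implications $(1) \Rightarrow (2) \Rightarrow (3) \Rightarrow (1)$, since $(2)$ is the strongest-looking but most explicitly constructible notion and will serve as the bridge. The nontrivial content is in $(1) \Rightarrow (2)$ and $(3) \Rightarrow (1)$; the implication $(2) \Rightarrow (3)$ should be comparatively short once the right definitions are unwound.

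For $(1) \Rightarrow (2)$, suppose $A$ admits a global section $d = \{d^\sigma\}_{\sigma \in S}$ with $d^\sigma \in \mathcal{D}_{\mathbb{R}}\mathcal{E}(X)$ and $d^\sigma_m = \sigma_m$. I would build the candidate ontological representation on $\Omega := \mathcal{E}(X)$, the set of global assignments $\lambda : X \to O$, setting $\mu_\sigma(\lambda) := d^\sigma(\lambda)$ and $\xi_m(\lambda)(k) := \delta_{\lambda(m), k}$ (an outcome-deterministic, factorizable choice). Equation~\eqref{def:or} is then immediate from $d^\sigma_m = \sigma_m$ together with the definition of $d_{m,\sigma}$ in $Op(A)$. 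The real work is showing this representation can be made \emph{non-contextual}: a priori $\mu_\sigma = \mu_{\sigma'}$ need not hold when $\sigma \sim \sigma'$, because the global section may encode data beyond the outcome statistics. This is exactly where the ``minimal theory'' construction promised in the Overview comes in: I would quotient $S$ by statistical equivalence (and likewise collapse statistically equivalent measurement-outcome pairs), check that a global section descends to the quotient theory — using no-signalling and Lemma~\ref{lem:convemp}-style bookkeeping to see that the descended data is well-defined — and then pull the resulting representation back so that $\mu$ depends only on the equivalence class $[\sigma]$. For measurement non-contextuality, since $(m,k) \sim (m',k')$ forces $d_{p,m}(k) = d_{p,m'}(k')$ for all $p$, and $\xi$ is defined purely from the deterministic assignment, one checks $\xi_{m}(\lambda)(k) = \xi_{m'}(\lambda)(k')$ holds on the support of every $\mu_p$, which by the standing assumption (every $\lambda$ is reached by some preparation) is all of $\Omega$. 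This yields the canonical representation and establishes $(2)$.

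For $(2) \Rightarrow (3)$: the canonical representation constructed above is already factorizable by design, since $\xi_m(\lambda)(o) = \delta_{\lambda(m),o}$ multiplies across components of a joint measurement — a deterministic assignment is trivially factorizable (this is the content of Theorem 6 of~\cite{liangspekkenswiseman} specialised to the deterministic case). So $(3)$ follows with essentially no extra argument, provided one has phrased the canonical construction to respect the joint-measurement structure of ${\bf C}_A$; I would include a sentence verifying that restriction of the deterministic $\xi$ along $\pi_S$ agrees with the factor $\xi_{m_S}$, i.e. that Lemma~\ref{lem:nosig} is satisfied.

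For $(3) \Rightarrow (1)$: given a factorizable non-contextual ontological representation $(\Omega, \mu, \xi)$ of $Op(A)$, I would define, for each state $\sigma \in S$ (i.e.\ each preparation), the distribution $d^\sigma \in \mathcal{D}_{\mathbb{R}}\mathcal{E}(X)$ by $d^\sigma(\lambda_{\mathrm{fun}}) := \sum_{\lambda \in \Omega} \mu_\sigma(\lambda) \prod_{m \in X} \xi_m(\lambda)(\lambda_{\mathrm{fun}}(m))$, summing the ontic weight of all $\lambda$ compatible with the global outcome assignment $\lambda_{\mathrm{fun}} : X \to O$. That this is a probability distribution follows because for fixed $\lambda$ the numbers $\xi_m(\lambda)(\cdot)$ are probability distributions on $O^m$, so summing over all $\lambda_{\mathrm{fun}}$ factors as $\prod_m \sum_k \xi_m(\lambda)(k) = 1$. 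The marginalisation identity $d^\sigma_X|_C = \sigma_C$ for each context $C \in \mathcal{M}$ is where factorizability is essential: restricting to a context $C = \{m_1,\dots,m_n\}$ collapses the product over $m \notin C$ to $1$, leaving $\sum_\lambda \mu_\sigma(\lambda) \xi_C(\lambda)(\cdot)$ by factorizability of $\xi_C$, which equals $d_{p,C}$ by~\eqref{def:or}, which equals $\sigma_C$ by the definition of $Op(A)$. Checking $d^\sigma_m = \sigma_m$ for single measurements is the same computation one dimension down. This exhibits a global section, giving $(1)$.

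\medskip

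\noindent\textbf{Main obstacle.} The delicate step is $(1) \Rightarrow (2)$ — specifically, engineering non-contextuality rather than just producing \emph{some} ontological representation. The passage to the minimal/quotient theory must be shown to (a) be well-defined (statistically equivalent states really do carry the same restricted section data, which needs no-signalling), (b) still admit a global section after quotienting, and (c) interact correctly with the joint-measurement preorder ${\bf C}_A$ so that measurement non-contextuality is genuinely obtained and not merely asserted. The implications $(2)\Rightarrow(3)$ and $(3)\Rightarrow(1)$ are, by contrast, direct computations with factorizability doing the bookkeeping.
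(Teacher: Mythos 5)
Your proposal is correct and follows essentially the same route as the paper: the deterministic representation on $\mathcal{E}(X)$ made non-contextual by passing to the quotient/minimal theory $\tilde{A'}$ gives $(1)\Rightarrow(2)$, determinism gives factorizability for $(2)\Rightarrow(3)$, and your product formula $d^{\sigma}(s)=\sum_{\lambda}\mu_{\sigma}(\lambda)\prod_{m}\xi_{m}(\lambda)(s(m))$ is exactly the paper's Lemma~\ref{lem:F'esssurj} for $(3)\Rightarrow(1)$. The only cosmetic difference is that the paper secures measurement non-contextuality by defining $\xi$ directly on equivalence classes of measurement--outcome pairs, rather than via your support argument, but both rest on the same minimal-theory construction.
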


\subsection*{A Canonical Ontological Representation}\label{sec:canhidvar}
As a warm-up, we recall the canonical ontological representation for empirical models with a global section $d$, which was introduced in~\cite{abramskybrandenburger}. An empirical model corresponds to an empirical theory with only one state. The ontological states are given by the global sections of outcomes, the distributions $\mu$ correspond to the global section of distribution functions and $\xi_{m}(s)(k)$ indicates whether $s$ assigns the outcome $k$ to the measurement $m$.

\begin{align*}
\Omega = \mathcal{E}(X) \hspace{1cm} \mu_{\sigma}(s)=d(s) \hspace{1cm} \xi_{m}(s)(k) = \delta_{s|_{m}(m),k}
\end{align*}

It is easy to see that this ontological representation is generally not non-contextual. The sections may assign different outcomes to statistically equivalent measurements. Suppose that $s$ is a section of measurement outcomes such that $s|_m \neq s|_n$ for $m \sim n$, then $\xi_m(s)(k) \neq \xi_n(s)(k)$. 
 To get around this, we will prove that whenever a global section exists, we can find another global section that depends on equivalence classes of measurements only. It is not hard to see that the same holds for states.

\subsection*{Statistical Equivalence in Empirical Theories}\label{sec:stateq}
We call two states $\sigma, \sigma' \in S$ and two measurement labels $m,m' \in Ob({\bf C})$ {\bf statistically equivalent} when $\sigma_m = \sigma'_m$ for all $m \in Ob({\bf C})$ and $\sigma_m = \sigma_{m'}$ for all $\sigma \in S_A$, respectively. In that case we write $\sigma \sim \sigma'$ and $m \sim m'$. 

Let $A=({\bf C}_A, S_A)$ be an empirical theory.  We construct a new empirical theory $\tilde{A}:=({\bf C}_A/{\sim},  \tilde{S}_A)$ by quotienting the objects of ${\bf C}$ by the equivalence relation. The new category ${\bf C}_A/\sim$ contains an arrow between two equivalence classes if there exists an arrow between two representatives of the classes. 
%Note that the measurement cover $\mathcal{M}_{\tilde{A}}$ consists of equivalence classes of $\mathcal{M}_A$ and 
It is instructive to unfold the structure of this new empirical theory. For each object $[C]$ of ${\bf C}_A/\sim$ the new set of sections $\mathcal{E}([C])$ contains a (not necessarily unique) section $\tilde{s}$ for each $s \in \cE(C)$. This section is defined as $\tilde{s}([C]) := s(C)$. 
The states in $\tilde{S}_A:= \{\tilde{\sigma}\}_{\sigma \in S_A}$, are defined as $\tilde{\sigma}_{[C]}(\tilde{s}):= \sigma_C(s)$.
The set $\tilde{S}_A$ is well-defined, because $[C]=[D]$ if and only if $\sigma_C= \sigma_D$ for each $\sigma \in S_A$.

\begin{lemma}\label{lem:gs}
Any empirical theory $A$ admits a global section iff it admits a global section that only depends on equivalence classes of measurements of $A$.
\end{lemma}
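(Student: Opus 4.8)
The plan is to prove the non-trivial direction: given a global section $d = \{d^\sigma\}_{\sigma \in S_A}$ for $A$, produce a global section that factors through equivalence classes of measurements. (The converse is immediate, since any global section depending only on equivalence classes is in particular a global section.) The key idea is to pass to the quotient theory $\tilde A = ({\bf C}_A/{\sim}, \tilde S_A)$ described just above, show that $d$ descends to a global section $\tilde d$ for $\tilde A$, and then pull $\tilde d$ back along the quotient map to obtain a global section of $A$ that, by construction, only depends on equivalence classes.

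First I would make precise the map on outcomes. For each context $C \in \cM$ the assignment $s \mapsto \tilde s$, where $\tilde s([C']) := s(C')$ for $[C'] \subseteq [C]$, gives a surjection $\cE(C) \twoheadrightarrow \cE([C])$ compatible with restrictions; composing with $\mathcal{D}_{\mathbb{R}}$ gives, for each state $\sigma$, a pushforward $\mathcal{D}_{\mathbb{R}}\cE(X) \to \mathcal{D}_{\mathbb{R}}\cE(X/{\sim})$. Define $\tilde d^{\tilde\sigma}$ to be the image of $d^\sigma$ under this pushforward. I would then check the two things that make this a global section of $\tilde A$: (i) well-definedness — if $\sigma \sim \sigma'$ then $\tilde\sigma = \tilde\sigma'$ by the definition of $\tilde S_A$, and the pushforwards of $d^\sigma$ and $d^{\sigma'}$ agree because a global section is determined on each context by its marginals $d^\sigma_m = \sigma_m$, which coincide for statistically equivalent states and measurements; (ii) the marginal condition $\tilde d^{\tilde\sigma}_{[C]}|_{[m]} = \tilde\sigma_{[m]}$, which follows by pushing the identity $d^\sigma_C|_m = \sigma_m$ through the quotient and using $\tilde\sigma_{[m]}(\tilde s) = \sigma_m(s)$.

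Next I would go back the other way. Given the global section $\tilde d$ of $\tilde A$, define $d'^\sigma \in \mathcal{D}_{\mathbb{R}}\cE(X)$ by $d'^\sigma(s) := \tilde d^{\tilde\sigma}(\tilde s)$ for a chosen section-level lift, or more robustly by declaring $d'^\sigma$ to be the distribution supported on sections that are constant on $\sim$-classes with $d'^\sigma(s) = \tilde d^{\tilde\sigma}([s])$ when $s$ is such a section and $0$ otherwise. One checks that $d'^\sigma$ has the correct marginals $d'^\sigma_m = \sigma_m$ — because $\tilde\sigma_{[m]} = \sigma_m$ under the identification — so $d' = \{d'^\sigma\}$ is a global section of $A$; and by construction $d'^\sigma(s) = d'^{\sigma}(s')$ whenever $s|_m = s'|_m$ up to replacing each $m$ by an equivalent $m'$, i.e. $d'$ depends only on equivalence classes of measurements, and likewise on equivalence classes of states.

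The main obstacle I anticipate is bookkeeping around the section-level lifts: the map $s \mapsto \tilde s$ is generally many-to-one, so one must be careful that the pushforward/pullback constructions are genuinely inverse to each other on the relevant subspace of distributions and that no probability mass is "lost" or "duplicated" when collapsing or re-expanding classes. Concretely, the subtle point is that distinct sections $s, s'$ of $A$ with $\tilde s = \tilde s'$ must already be assigned a coherent total mass by $d^\sigma$ before quotienting — this is exactly what the compatibility of $d^\sigma$ with the marginals onto each $[m]$ guarantees, via the no-signalling hypothesis and the fact that $[C] = [D]$ iff $\sigma_C = \sigma_D$ for all $\sigma$. Once that coherence is in hand, the rest is a routine diagram chase through the functoriality of $\mathcal{D}_{\mathbb{R}}$ and $\cE$.
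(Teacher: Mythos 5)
Your proposal is correct and follows essentially the same route as the paper: pass to the quotient theory $\tilde{A}$, push the global section forward, and pull it back as a distribution supported on sections that are constant on equivalence classes. The only difference is presentational: the paper packages both directions as applications of Lemma~\ref{lem:globalsecsub} to the quotient morphism $q\colon A \to \tilde{A}$ and a choice of inclusion $i\colon \tilde{A} \to A$ in $\Emp$, whereas you carry out the corresponding pushforward/pullback of distributions by hand.
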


We will prove this Lemma formally in Section~\ref{sec:catiso}. Intuitively, it can be understood as follows: Any global section $d$ of $S_A$ can be restricted to a global section over a subcategory of ${\bf C}_A$ of representatives of ${\bf C}_A/\sim$. This restriction defines a global section for $\tilde{S}_A$. Conversely, any global section $\tilde{d}$ of $\tilde{A}$ defines a global section $d$ for $A$, defined as $d(s):= \tilde{d}({\tilde{s}})$ when $s$ assigns the same value to all elements of an equivalence class, and $d(s):= 0$ otherwise. 

We have shown how to deal with equivalence on the level of measurements. However, individual outcomes of measurements can be statistically equivalent, even when the measurements as a whole are not. This means that for some $s \in O^m$ and $s' \in O^{m'}$, $\sigma_m(s) = \sigma_{m'}(s')$ for all $\sigma \in S_A$. To eliminate this last form of statistical redundancy, we rewrite any such system type $A$ as a system type $A'$ with outcome set $\{0,1\}$.  The measurement labels of $A'$ are given by the individual observables in each measurement. We denote each observable by a tuple $(m,k)$ of a measurement and an outcome, so $X_{A'} :=  \{(m,k)\}_{m \in X_A, k \in O}$.
The measurement cover is given by the sets of observables that form a measurement in the original cover: $\mathcal{M}_{A'} := \{\{(m,k)|k \in O, m \in C\}_{C \in \mathcal{M}_A}\}$. The outcomes $0$ and $1$ indicate whether the outcome corresponding to the observable is observed, hence $S_{A'} := \{\sigma'| \sigma'_{(m,k)}(1) = \sigma_m(k)\}$. The support $\mathcal{E}(m)$ of each measurement $m$ consists of those sections where exactly one observable in each measurement is assigned a $1$, and all others are assigned a $0$.  

%\begin{align}
 %X_{A'} &=  \{(m,k)\}_{m \in X_A, k \in O}, \hspace{.5cm}
%\mathcal{M}_{A'} &= \{\{(m,k)|k \in O, m \in C\}_{C \in \mathcal{M}_A}\}, \hspace{.5cm}
%& S_{A'} &= \{\sigma'| \sigma'_{m(k)}(1) = \sigma_m(k)\} 
%\end{align}

%This is a generalisation of the example of Kochen-Specker type models described in section 7 of~\cite{abramskybrandenburger}.
 Note that the model $A$ has a global section iff $A'$ has a global section under the given restrictions. As a consequence of Lemma~\ref{lem:gs}, $A$ has a global section iff $\tilde{A'}$ has a global section. Hence, $A$ contains a global section induced by a global section $d$ for $\tilde{A'}$, which is only defined on equivalence classes.  
 
\subsection*{Non-contextual Canonical Ontological Representations}
We can now define a canonical ontological representation that preserves non-contextuality. 
Let $A$ be an empirical theory with a global section $d_\sigma$ for each state $\sigma \in S_A$, which only depends on the equivalence classes of the preparations. We make use of the minimal empirical theory $\tilde{A'}$ and its induced global sections $\tilde{d}_{\tilde{\sigma}}$ to define the canonical ontological representation $R(A)=(\Omega^{NC}_A, \{\mu^{NC}_{\sigma}\}_{\sigma \in S_A}, \{\xi^{NC}_m\}_{m \in \downarrow \mathcal{M}})$:

\begin{align*}
\Omega_{R(A)} &:= \mathcal{E}(X({\bf C/\sim})), \hspace{.5cm}
&\mu^{NC}_{\sigma}(s) &:= \tilde{d}_{\tilde{\sigma}}(\tilde{s}), \hspace{.5cm}
&\xi^{NC}_m(s)(k) &:= \delta_{\tilde{s}([m]),[k]}
\end{align*}

Note that $\xi_{m}^{NC}(s)(k)$ is only defined when $s$ is a section over $n$, so when this is not the case, we will take $\xi_m^{NC}(s)(k)$ to be $0$. This representation generates the required outcome statistics, as shown below.

\begin{align*}
\sum_{s \in \Omega^{NC}_A} \mu^{NC}_{\sigma}(s)\xi^{NC}_{m}(s)(k)
&= \sum_{\tilde{s} \in \mathcal{E}(X({\bf C}/\sim))}{\tilde{d}_{\tilde{\sigma}}}(\tilde{s}) \delta_{\tilde{s}|_{[m]}([m]),[k]}\\
&= \sum_{s \in \mathcal{E}(X({\bf C}))}d_{\sigma}(s) \delta_{s|_{m}(m),k}\\
&= d^A_{\sigma, m}(k)
\end{align*}

The first equality holds by unfolding definitions of the canonical representation. The second equality holds because $d_{\sigma}(s)$ is only nonzero on those sections $s$ that assign the same outcome to all equivalent measurements; therefore, we can extend the sum over $\mathcal{E}(X/\sim)$ to the sum over $\mathcal{E}(X)$. The last equality holds as both expressions are equal to $\sigma_{m}(s)(k)$.

This canonical ontological representation is by definition preparation non-contextual. On measurements, it is defined such that $m \sim m'$ implies $\xi^{NC}_{m}=\xi^{NC}_{m'}$; hence it is measurement non-contextual. 

It is left to determine under which conditions an operational theory can be realised by a non-contextual empirical theory. To this end, we generalise Theorem 8.1 of \cite{abramskybrandenburger}.

\begin{lemma}\label{lem:F'esssurj}
For every factorizable, non-contextual ontological representation $B$, there exists an empirical theory $A$ with a global section, such that $\R(A)$ and $B$ realise the same operational theory.
\end{lemma}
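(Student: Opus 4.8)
The plan is to start from a factorizable, non-contextual ontological representation $B = (\Omega, \{\mu_p\}_{p\in P}, \{\xi_m\}_{m\in M})$ of an operational theory $(P,M,D,O)$ and to build an empirical theory $A$ whose canonical representation $\R(A)$ realises the same operational theory. The natural candidate for the system type of $A$ is the thin category ${\bf C}_A$ whose objects are the jointly measurable subsets of $M$ (with the arrows induced by the joint-measurability preorder of Definition~\ref{def:jointm}), so that the one-element objects recover the measurement labels $X = M$ and the maximal objects recover the measurement cover $\cM$. The states of $A$ are indexed by $P$: to each preparation $p$ I assign the state $\sigma^p$ defined on a context $C = \{m_1,\dots,m_n\}$ by
\begin{equation*}
\sigma^p_C(s) \;:=\; \sum_{\lambda\in\Omega} \mu_p(\lambda)\,\prod_{i=1}^n \xi_{m_i}(\lambda)\bigl(s(m_i)\bigr).
\end{equation*}
Because $B$ is factorizable and measurement non-contextual, the joint distribution $\xi_m$ of a context agrees with the product of its components' distributions, so $\sigma^p$ is well defined, is consistent under restriction between contexts (hence no-signalling), and its single-measurement marginal is exactly $\sum_\lambda \mu_p(\lambda)\xi_m(\lambda)(k) = d_{p,m}(k)$ by equation~\eqref{def:or}. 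This shows $Op(A) = (P,M,D,O)$.

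The next step is to exhibit a global section for $A$. Factorizability lets me push the product structure all the way up to a distribution on global assignments: define $d^{\sigma^p} \in \mathcal{D}_{\mathbb{R}}\mathcal{E}(X)$ by
\begin{equation*}
d^{\sigma^p}(t) \;:=\; \sum_{\lambda\in\Omega} \mu_p(\lambda)\,\prod_{m\in X} \xi_m(\lambda)\bigl(t(m)\bigr)
\end{equation*}
for a global section of outcomes $t\colon X\to O$. One checks that restricting $d^{\sigma^p}$ to any context $C$ telescopes (each factor $\xi_m$ with $m\notin C$ sums to $1$) and returns $\sigma^p_C$, so $d = \{d^{\sigma^p}\}_{p\in P}$ is a genuine global section; thus $A$ is non-contextual in the sheaf sense.

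Finally I must compare $\R(A)$ with $B$ as ontological representations of $(P,M,D,O)$. By construction $\R(A)$ has ontic space $\mathcal{E}(X({\bf C}/{\sim}))$, i.e. global sections of the minimal theory $\tilde{A'}$, with $\mu^{NC}_{\sigma^p}$ the induced global section and $\xi^{NC}_m$ a deterministic read-off; it is non-contextual and factorizable (the read-off of a context is literally the product of the read-offs of its one-element subsets). As established just before the lemma, $\R(A)$ realises the operational theory $Op(A) = (P,M,D,O)$, which is precisely the theory realised by $B$. Since the statement only asks that $\R(A)$ and $B$ realise the \emph{same operational theory}, this suffices, and no isomorphism between the two ontic spaces is required.

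The main obstacle I anticipate is bookkeeping rather than conceptual: verifying that $\sigma^p$ is genuinely well defined on the category ${\bf C}_A$ — that the value assigned to a context does not depend on which larger context it is read off from, and that the joint-measurement compatibility condition~(iii) of Definition~\ref{def:jointm} is respected — and then carrying the factorized form through the quotient by statistical equivalence and through the passage to the $\{0,1\}$-valued minimal theory $\tilde{A'}$ used in the definition of $\R$. Factorizability of $B$ is exactly what makes every one of these compatibility checks go through, so the argument should be routine once the correspondence is set up carefully.
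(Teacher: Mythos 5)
Your construction is essentially the paper's own proof: you build the empirical theory from the joint-measurability preorder, define each state $\sigma^p_C(s)=\sum_\lambda \mu_p(\lambda)\prod_i\xi_{m_i}(\lambda)(s(m_i))$ and the global section $d^{\sigma^p}(t)=\sum_\lambda\mu_p(\lambda)\prod_{m}\xi_m(\lambda)(t(m))$ exactly as the paper does, and use factorizability plus the ``marginals sum to one'' telescoping to check consistency. The only cosmetic difference is that you appeal to the already-established fact that $\R(A)$ realises $Op(A)$ rather than redoing that computation inline, which is a legitimate shortcut.
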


\begin{proof}
Let $B$ be a factorizable, measurement non-contextual ontological representation. The operational theory realised by $B$ induces an empirical theory $A$ where $X_A$ is given by the minimal elements of the preorder of joint measurements. By Lemma \ref{lem:nosig}, B is parameter independent. Every preparation $p \in P_B$ realises a state $\sigma_p$ with a global section $d_p$ for the sheaf of distributions induced by $A$. These are defined below for $r \in \cE(m)$ and $s \in \cE(X)$:

\begin{align}
\sigma_p(r) &:= \sum_{\lambda \in \Omega_B} \xi_{m}(\lambda)(r(m)) \mu_P(\lambda) \hspace{.1cm}
&d_{p}(s) &:=  \sum_{\lambda \in \Omega_B} \prod_{m \in X_A} \xi_{m}(\lambda)(s|_m(m)) \mu_P(\lambda)
\end{align}

%The sections of the sheaf of events are given by the outcome sets $O^m$ and the sections of the sheaf of distributions are defined by the distribution function $d$ of the operational theory realised by $B$.

We need to verify that $\R(A)$ and $B$ realise the same measurement statistics. This follows from the equalities below, where we denote the canonical ontological representation by $\Omega'_A, \mu'$, and $\xi'$.

\begin{align*}
\sum_{\tilde{s} \in \Omega'_A} \xi'_{m}(\tilde{s})(k) \mu'_p 
&=  \sum_{\tilde{s}\in \mathcal{E}(X_A/\sim)} \delta_{\tilde{s}|_{[m]}([m]),[k]} \bigg[\tilde{d}_{\tilde{\sigma}_P}(\tilde{s})\bigg]\\
&=  \sum_{s\in \mathcal{E}(X_A)} \delta_{s|_{m}(m),k} \bigg[\sum_{\lambda \in \Omega_B} \prod_{n \in X_A} \mu_p(\lambda) \xi_{n}(\lambda)(s|_{n}(n)) \bigg]\\
&= \sum_{\lambda \in \Omega_B} \xi_m(\lambda)(k)\bigg[\sum_{s \in \mathcal{E}(X_A\backslash m)}\hspace{.1cm} \prod_{n \in X_A\backslash m}  \xi_{n}(\lambda)(s|_n(n)) \bigg] \mu_p(\lambda) \\
&= \sum_{\lambda \in \Omega_B} \xi_m(\lambda)(k) \mu_p(\lambda)
\end{align*}

The first two equalities result from expanding definitions. For the third, we apply Fubini's theorem, split the sum and product, and rewrite the expression. The last equality holds because probability distributions sum to one over all the inputs.
\end{proof}

\begin{proof}[Proof of Theorem~\ref{thm:eq}]
For any empirical theory $A$, the canonical non-contextual ontological representation for $Op(A)$ is given by $R(A)$, which means that $1) \Rightarrow 2)$. The canonical ontological representation $R(A)$ is factorizable; therefore, $2) \Rightarrow 3)$. Finally, $3) \Rightarrow 1)$ holds by Lemma~\ref{lem:F'esssurj}.
\end{proof}

\begin{corollary}\label{cor:can}
For the class of perfectly predictable operational theories with a maximally mixed preparation, an operational theory is non-contextual iff its canonical ontological representation is non-contextual.
\end{corollary}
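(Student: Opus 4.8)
The plan is to reduce the statement to Lemma~\ref{lem:od} and Theorem~\ref{thm:eq}, with the bridge between them supplied by the fact (Theorem~6 of~\cite{liangspekkenswiseman}) that an outcome-deterministic, measurement non-contextual representation is factorizable. So I would first fix the operational theory, call it $B$, and let $A$ be the no-signalling empirical theory with $Op(A)=B$ provided by the correspondence of Section~\ref{sec:uni}. Recall that the canonical ontological representation $R(A)$ is defined precisely when $A$ admits a global section, and that whenever it is defined it is, by construction, both preparation non-contextual (the distributions $\mu^{NC}_\sigma$ depend only on equivalence classes of preparations) and measurement non-contextual (by the definition of $\xi^{NC}_m$ via $\delta_{\tilde s([m]),[k]}$), as well as factorizable (as observed in the proof of Theorem~\ref{thm:eq}). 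This already gives the easy implication: if $R(A)$ is defined and non-contextual, then $B=Op(A)$ is realised by a non-contextual ontological representation, hence $B$ is non-contextual.

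For the converse I would assume $B$ non-contextual and pick a non-contextual ontological representation $(\Omega,\mu,\xi)$ of $B$. Since $B$ is perfectly predictable, contains a maximally mixed preparation, and satisfies Assumption~\ref{pos:conv} (which we take to be in force here, as it is the setting of Lemma~\ref{lem:od}), Lemma~\ref{lem:od} applies to every measurement $m$ of $B$: because the representation is preparation non-contextual, each $\xi_m$ is outcome-deterministic. Hence the representation is outcome-deterministic, and being also measurement non-contextual it is factorizable by Theorem~6 of~\cite{liangspekkenswiseman}. Thus $Op(A)=B$ admits a factorizable non-contextual ontological representation, i.e.\ statement~(3) of Theorem~\ref{thm:eq} holds, so statement~(1) holds and $A$ admits a global section. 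Therefore $R(A)$ is defined, and by the discussion above it is non-contextual, which is exactly what is required.

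The main obstacle — and the only place where the hypotheses of the corollary are used in an essential way — is the chain ``non-contextual $\Rightarrow$ outcome-deterministic $\Rightarrow$ factorizable''. Perfect predictability of \emph{every} measurement is what lets Lemma~\ref{lem:od} be applied uniformly to all the $\xi_m$; the maximally mixed preparation together with Assumption~\ref{pos:conv} is what makes Lemma~\ref{lem:od} go through in the first place (forcing $\Omega=\bigcup_k\Omega_{p_k}$); and the Fine-type result of~\cite{liangspekkenswiseman} is what upgrades outcome-determinism to factorizability so that Theorem~\ref{thm:eq} can be invoked. Once these are in place the argument is a short composition of earlier results; the only point to be careful about is that the statement is genuinely conditional, in that the canonical representation exists at all precisely because sheaf-theoretic non-contextuality of $A$ is delivered by Theorem~\ref{thm:eq}.
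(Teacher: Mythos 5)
Your proposal is correct and follows essentially the same route as the paper's own proof: Lemma~\ref{lem:od} forces outcome-determinism, Theorem~6 of~\cite{liangspekkenswiseman} upgrades this to factorizability, and Theorem~\ref{thm:eq} closes the loop. You are merely more explicit about the easy direction and about Assumption~\ref{pos:conv} being in force, which the paper leaves implicit.
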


\begin{proof}
By Lemma~\ref{lem:od}, all preparation non-contextual ontological representations of operational theories in this class are outcome-deterministic. By Theroem 6 of~\cite{liangspekkenswiseman} that implies that all non-contextual ontological representations are factorizable. The result follows directly from Theorem~\ref{thm:eq}. 
\end{proof}

\section{A Categorical Isomorphism}\label{sec:catiso}
In this section, we show that the two formalisms can be used to represent the no-signalling world in equivalent ways. To give a formal proof, we use the mathematical framework of category theory.
We show that the correspondence between no-signalling empirical theories, operational theories and ontological representations discussed in the previous section gives rise to functors between suitable categories. In particular, there is an isomorphism between the categories $\Emp_{ns}$ of no-sigmalling empirical theories and $\Ot$ operational theories. This isomorphism maps non-contextual empirical theories to operational theories that admit a factorizable non-contextual ontological representation.

%\begin{center}
%\begin{tikzcd}[column sep=small]
% 														&		& \Ot^{f,NC} \arrow[rrd, "\cong"] & 	& \\
%\Emp^{NC} 	\arrow [rru, "\sim"] 
%\arrow[rrrr, hook] 	
%\arrow [rd, hook] 
%\arrow[rrddd, hook, bend right=10]
%\arrow [dd, "\sim"']							&		&											&		& \Ot^{NC} \arrow[ld, hook] \\
%														& \Emp \arrow[rr, "\cong"] \arrow [rd, "\sim"']&     & \Ot & 	\\										\Or^{f, NC} \arrow[rrd, hook]					&   	& \Or \arrow[ru, "\sim"']		&		&			\\
%														&		&	\Or^{NC} \arrow[u, hook] \arrow[rruuu, "\sim"', bend right=20] &	&
%\end{tikzcd}
%\end{center}

%For empirical models and operational theories with sharp measurement, we get the simpler diagram below.
%\begin{center}
%\begin{tikzcd}[column sep=small] 														
%\Emp^{NC} 
%\arrow[rrrr, "\cong"] 	
%\arrow [rd, hook] 
%\arrow[rrddd, "\sim"', bend right=20]&		&											&		& \Ot^{NC} \arrow[ld, hook] \\
%														& \Emp \arrow[rr, "\cong"] \arrow [rd, "\sim"']&     & \Ot & 	\\														&   	& \Or \arrow[ru, "\sim"']		&		&			\\
%														&		&	\Or^{NC} \arrow[u, hook] \arrow[rruuu, "\sim"', bend right=20] &	&
%\end{tikzcd}
%\end{center}
\subsection*{The Category of Empirical Theories}
\label{sec:catemp}
We will define  the category  $\Emp$  of empirical theories and  transformations that preserve contextuality and statistical equivalence. The category $\mathcal{E}mp$ is an extension of the category of empirical models introduced in~\cite{empcat}.

\begin{definition}
A transformation between empirical theories is given by a triple $f=(f^S,f^{\cM}, f^O)$ of maps between the set of states, the measurement cover and the set of outcomes, respectively. In addition, each assignment $C \mapsto f^{\cM}(C)$ consists of a functor $f^C: C \rightarrow f^{\cM}(C)$ of the subcategories of objects with an arrow to $C$ and $f^{\cM}(C)$, respectively.
\end{definition}

Note that if ${\bf C}_A$ and ${\bf C}_B$ are posets, $f$ is a simplicial map $\downarrow \mathcal{M}_A \rightarrow \downarrow \mathcal{M}_B$. We write $f$ for either component when it is clear from the context which one we mean. If a transformation satisfies the following equation, we can recover the statistical data of the domain from the statistical data of the image. 

\begin{equation}\label{eq:cp} \sigma_C(s) = \sum_{s' \circ f^{\bf C} = f^O \circ s} f^S(\sigma)_{f^{\bf C}(C)}(s') \hspace{2cm} \forall C \in \cM_A
\end{equation}

We will call such transformations {\bf contextuality preserving} due to the following Lemma.

\begin{lemma}\label{lem:globalsecsub}
Let $f:A \rightarrow B$ be a transformation of empirical theories that satisfies equation (\ref{eq:cp}) and let $\sigma$  be a state of $A$. If $\sigma$ does not admit a global section, then $f(\sigma)$ does not admit a global section.
\end{lemma}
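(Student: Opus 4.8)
The statement is essentially the contrapositive-friendly direction: a contextuality-preserving transformation cannot create a global section where none existed, i.e.\ if $f(\sigma)$ has a global section then so does $\sigma$. So the plan is to assume $f(\sigma)$ admits a global section $d^{f(\sigma)} \in \mathcal{D}_{\R}\mathcal{E}(X_B)$ and explicitly construct a global section $d^{\sigma} \in \mathcal{D}_{\R}\mathcal{E}(X_A)$ for $\sigma$ by pulling back along $f$. The natural candidate is $d^{\sigma}(s) := \sum_{t \,:\, t \circ f^{\bf C} = f^O \circ s} d^{f(\sigma)}(t)$, summing the global section of the image over all sections $t \in \mathcal{E}(X_B)$ that are compatible with $s$ via the functor $f^{\bf C}$ and the outcome map $f^O$ — mirroring exactly the shape of equation~(\ref{eq:cp}).

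\textbf{Key steps, in order.} First I would check that $d^{\sigma}$ so defined is a genuine $\R$-distribution on $\mathcal{E}(X_A)$: non-negativity (or the appropriate semiring condition) is immediate since it is a sum of values of $d^{f(\sigma)}$, and normalization $\sum_{s} d^{\sigma}(s) = 1$ follows because the fibres $\{t : t\circ f^{\bf C} = f^O\circ s\}$ over distinct $s$ are disjoint and, together, do not over-count — here I would need that every $t$ in the support of $d^{f(\sigma)}$ arises from at most one relevant $s$ on the relevant measurements, or else absorb the discrepancy; the cleanest route is to note $\sum_s d^\sigma(s) = \sum_s \sum_{t} \dots = \sum_t d^{f(\sigma)}(t) \cdot (\text{number of compatible }s)$ and argue this collapses to $\sum_t d^{f(\sigma)}(t) = 1$. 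Second, and this is the heart of the argument, I would verify the defining property of a global section, $d^{\sigma}_X|_C = \sigma_C$ for every $C \in \cM_A$: restricting $d^\sigma$ to $C$ and grouping sections by their restriction, then applying the definition of $d^\sigma$ and reindexing the double sum, one lands on $\sum_{s' \circ f^{\bf C} = f^O \circ s} f^S(\sigma)_{f^{\bf C}(C)}(s')$ where $s'$ ranges over $\mathcal{E}(f^{\bf C}(C))$ — and this equals $\sigma_C(s)$ precisely by equation~(\ref{eq:cp}), using that $d^{f(\sigma)}$ is a global section for $f(\sigma)$ so its restriction to $f^{\bf C}(C)$ is $f^S(\sigma)_{f^{\bf C}(C)}$. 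Finally I would remark that this shows $\sigma$ admits a global section, contradicting the hypothesis, which proves the Lemma.

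\textbf{Main obstacle.} The routine part is expanding and reindexing the sums; the delicate part is the bookkeeping in the normalization and restriction steps, specifically ensuring that the "pullback" operation interacts correctly with the functor $f^{\bf C}$ on \emph{each} measurement context separately and that compatibility of a global section $t$ over all of $X_B$ restricts correctly to compatibility over each $f^{\bf C}(C)$. One must be careful that $f^{\bf C}$ is only defined as a family of functors on the subcategories below each $C$, so "$t \circ f^{\bf C}$" needs to be read context-wise and shown to be consistent across overlapping contexts — but this consistency is exactly what no-signalling of $B$ (built into the notion of empirical theory) and the functoriality of $f$ guarantee. I expect the cleanest presentation is to do everything context-by-context, define $d^\sigma$ via its restrictions, check these restrictions agree on overlaps (hence glue to a global section, or at least define a consistent family), and then invoke~(\ref{eq:cp}); alternatively, if the paper's convention is that $\mathcal{E}(X_A)$-sections are the objects of record, the single displayed formula for $d^\sigma(s)$ above suffices and the obstacle reduces to the normalization check.
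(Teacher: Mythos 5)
Your proposal is correct and follows essentially the same route as the paper: the paper also argues the contrapositive, assuming $f(\sigma)$ has a global section $\nu$ and defining the pulled-back section by exactly your formula $\mu(s) = \sum_{s' \circ f^{\mathcal{M}} = f^{O} \circ s} \nu(s')$. The paper states this construction without spelling out the normalization and restriction checks, so your outline of those verification steps only makes explicit what the paper leaves implicit.
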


\begin{proof}
Suppose that $f(\sigma) \in B$ has a global section $\nu \in D_R\mathcal{E}(f(X_A))$. This means that $\nu|_{C'} = f(\sigma)_{C'}$ for all $C' \in X$. This induces a global section for $\sigma$, given by $\mu(s) = \sum_{s' \circ f^{\mathcal{M}} = f^{O} \circ s} \nu|_{f(X_A)}(s')$ in $D_R(\mathcal{E}(X_A))$.
 %We verify that this is a global section.

%\begin{align}
%\sigma_C(s) &:= \sum_{s' \in \epsilon(f(C)): s' \circ f = s} f_S(\sigma)_{f(C)}(s')\\
%&=  \sum_{s' \in \epsilon(f(C)): s' \circ f = s} \nu|_{f(C)}(s')\\
%&= \sum_{s' \in \epsilon(f(C)): s' \circ f = s} \sum_{t' \in \epsilon(X_B) : t'|_{f(C)} = s'}  \nu(t')\\
%&= \sum_{s' \in \epsilon(X_A): s'|_{C}= s} \sum_{t' \in \epsilon(f(X_A)) : t' \circ f = s'}  \nu|_{f(X_A)}(t')\\
%&= \sum_{s' \in \epsilon(X_A): s'|_{C}= s}   \mu(s')\\
%&= \mu|_{C}(s)
%\end{align} 
\end{proof}

When $A$ and $B$ are no-signalling theories, $f^{\cM}$ is simply a functor of categories ${\bf C}_A \rightarrow {\bf C}_B$.  Equation~\ref{eq:cp} then simplifies to the equation below.

\begin{equation}\label{eq:cp2} 
\sigma_m(s) =   f^S(\sigma)_{f^{C}(m)}(s') \hspace{1.5cm}  \forall f^O \circ s(m) = s' \circ f^{C}(m) \hspace{.1cm} \forall m \in X_A
\end{equation}

In addition to equation~\ref{eq:cp}, we require morphisms to preserve statistical equivalence:

\begin{equation}\label{eq:ep}
m \sim m' \hspace{.25cm} \Rightarrow \hspace{.25cm} f(m) \sim f(m') \hspace{1cm}
\sigma \sim \sigma' \hspace{.25cm} \Rightarrow \hspace{.25cm} f(\sigma) \sim f(\sigma')
\end{equation}

%Note that statistical equivalence for measurements follows from functoriality of $f^{\bf C}$.

%Morphisms in $\mathcal{E}mp$ are transformations that satisfy equation \ref{eq:cp} and \ref{eq:ep}. 

We can now prove the statement in Lemma~\ref{lem:gs}, that any global section gives rise to a global section defined on equivalence classes.

\begin{proof}[Proof of Lemma~\ref{lem:gs}] 
Consider the quotient map $A \xrightarrow{q} \tilde{A}$ and any inclusion map $\tilde{A} \xrightarrow{i} A$, which is defined as follows: $[C]$ is mapped to some representative $C$ such that $i^{\bf C}$ is a functor, and $\tilde{\sigma}$ is mapped to $\sigma$. It is easy to see that $q$ and $i$ are morphisms in $\mathcal{E}mp$. consequently, the proof follows from Lemma \ref{lem:globalsecsub}.
\end{proof}

\begin{remark}
Another way to define transformations between empirical theories is given in~\cite{ah}. Here, empirical models are defined in terms of Chu spaces and the function on states goes in the opposite direction. By that definition, contextuality of states would only be preserved by transformations that are surjective on states.
\end{remark}

In the rest of this paper we will restrict our attention to the subcategory $\Emp_{ns}$ of no-signalling empirical theories. We write $\Emp_{ns}^{NC}$ for the subcategory of non-contextual empirical theories.

%\begin{lemma}
%A state $\sigma$ in an empirical theory $A$ has a global section iff $\tilde{\sigma}$ has a global section in $\tilde{A}$. In that case, $A$ has a global section that depends only on the equivalence classes of $X_A$.
%\end{lemma}

%\begin{proof}
%Consider the quotient map $A \xrightarrow{q} \tilde{A}$ and the inclusion map $\tilde{A} \xrightarrow{i} A$, which defined as follows: $[C]$ is mapped to any representative $C$ in way that $i^{\bf C}$ is a functor, and $\tilde{\sigma}$ is mapped to $\sigma$. Equation \ref{eq:cp} holds for $q$ and $i$. As a result, the proof follows from Lemma \ref{lem:globalsecsub}.
%\end{proof}

\subsection*{The Category of Operational Theories}\label{sec:catOT}
Operational theories form a category $\mathcal{O}t$. Morphisms are tuples $f=(f^M,f^P, f^O): A \rightarrow B$, such that $f^M:M_A \rightarrow M_B$, $f^P:P_A \rightarrow P_B$ and $f^O: O_A \rightarrow O_B$ preserve outcome statistics and statistical equivalence:
\begin{equation} \label{def:otmorph}
d_{f^P(p),f^M(m)}(f^O(k)) = d_{p,m}(k)
\end{equation}

\begin{equation}\label{def:otmorph2}
m \sim m' \Rightarrow f(m) \sim f(m')  \hspace{1cm} p \sim p' \Rightarrow f(p) \sim f(p')
\end{equation}

This category is similar to the category of operational theories defined in~\cite{ah}.
%n addition, we require that morphisms {\bf preserve joint measurements}.  A morphism $f$ preserves joint measurements when for a joint measurement $m$ of $(m_1,..,m_n)$, the image $f(m)$ is a joint measurement of the images $f(m_1),...,f(m_n)$.

\subsection*{The Category of Ontological Representations}\label{sec:catOR}
Objects in the category $\Or$ of ontological representations correspond to a pair of an ontological representation and its induced operational theory. Morphisms consist of triples of maps $(f, f^{\mu}, f^{\xi})$, where $f:A \rightarrow B$ is a morphism of operational theories, and $f^{\xi}: \xi \mapsto \xi'$ and $f^{\mu} : \mu \mapsto \mu'$ are functions of sets. We require that the image of $(f, f^{\mu},f^{\xi})$ realises the operational theory in the image of $f$. This means that the images of the elements of $\mu$ and $\xi$ coincide with the elements corresponding to the images of $f^P$ and $f^M$. We express this as  $f^{\mu}(\mu_{p}) = f^{\mu}(\mu)_{f^P(p)}$ and $f^{\xi}(\xi_m) = f^{\xi}(\xi)_{f^M(m)}$. In addition, one can deduce from equations \ref{def:or} and \ref{def:otmorph} that the equality below holds.

\begin{equation}\label{def:ORmorphism}
\sum_{\lambda \in \Omega_B} f^{\xi}(\xi_{m})(\lambda)(f^O(k)) f^{\mu}(\mu_p)(\lambda) =  \sum_{\lambda \in \Omega_A} \xi_m(\lambda)(k)\mu_p(\lambda)
\end{equation}

\begin{remark}
Note that the morphisms do not contain a component that maps between the sets of ontological values. This is because our goal is not to understand individual ontological representations, but to explore the existence of certain classes of ontological representations for operational theories. 
\end{remark}

There is a forgetful functor $G: \mathcal{O}R \rightarrow \Ot$ that maps each ontological representation to its corresponding operational theory. More precisely, it maps $(\Omega, \xi, \mu)$ to $(\{\mu_p\}_{p \in P}, \{\xi_m\}_{m \in M}, D, O)$, where $d
_{p,m}:=\sum_{\lambda \in \Omega}\mu_p(\lambda)\xi_{m}(\lambda)$. The elements $\mu_P$ and $\xi_M$ no longer represent distribution functions, but merely label the preparations and measurements.  

\begin{lemma}
Contextuality of operational theories is preserved by morphisms in $\mathcal{O}T$
\end{lemma}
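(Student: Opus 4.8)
The statement to prove is that contextuality of operational theories is preserved by morphisms in $\mathcal{O}t$, i.e., if $f : A \to B$ is a morphism of operational theories and $A$ is contextual, then $B$ is contextual. Equivalently (contrapositively), if $B$ admits a non-contextual ontological representation, then so does $A$. The plan is to pull back a non-contextual ontological representation of $B$ along $f$ to obtain one for $A$, using equation~\ref{def:otmorph} (preservation of outcome statistics) and equation~\ref{def:otmorph2} (preservation of statistical equivalence).

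\textbf{Key steps.} First, suppose $B = (P_B, M_B, D_B, O_B)$ admits a non-contextual ontological representation $(\Omega_B, \{\mu^B_q\}_{q \in P_B}, \{\xi^B_n\}_{n \in M_B})$. Define a candidate representation for $A$ on the \emph{same} ontic space $\Omega := \Omega_B$ by setting $\mu^A_p := \mu^B_{f^P(p)}$ for each $p \in P_A$, and defining $\xi^A_m(\lambda)$ to be the pushforward of $\xi^B_{f^M(m)}(\lambda)$ along a section of $f^O$ restricted to the relevant outcome sets — concretely, $\xi^A_m(\lambda)(k) := \xi^B_{f^M(m)}(\lambda)(f^O(k))$ (this is well-defined as a distribution on $O^m$ because $f^O$ restricts to a bijection from $O^m$ onto the outcome set of $f^M(m)$, which follows from $f$ realising the operational theory and outcome statistics matching). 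Second, check that this realises the statistics of $A$: for each $p, m$,
\begin{align*}
\sum_{\lambda \in \Omega} \xi^A_m(\lambda)(k)\, \mu^A_p(\lambda) &= \sum_{\lambda \in \Omega_B} \xi^B_{f^M(m)}(\lambda)(f^O(k))\, \mu^B_{f^P(p)}(\lambda) \\
&= d^B_{f^P(p), f^M(m)}(f^O(k)) = d^A_{p,m}(k),
\end{align*}
where the last equality is precisely equation~\ref{def:otmorph}. Third, verify non-contextuality: if $p \sim p'$ in $A$, then by equation~\ref{def:otmorph2} we have $f^P(p) \sim f^P(p')$ in $B$, so $\mu^B_{f^P(p)} = \mu^B_{f^P(p')}$ by preparation non-contextuality of the representation of $B$, hence $\mu^A_p = \mu^A_{p'}$; the argument for measurements via $(m,k) \sim (m',k')$ is analogous, using that $f$ preserves equivalence of individual outcome pairs (which must be checked to follow from equation~\ref{def:otmorph2} together with the definition of outcome equivalence, or assumed as part of the morphism conditions). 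Finally, recall the standing assumption that every ontic value arises from some preparation; one should note that restricting $\Omega_B$ to $\bigcup_{p \in P_A} \{\lambda : \mu^A_p(\lambda) > 0\}$ yields a representation satisfying this, which does not affect the statistics since discarded ontic values carry zero weight for all $\mu^A_p$.

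\textbf{Main obstacle.} The routine part is the statistics computation. The genuinely delicate point is the well-definedness and behaviour of the outcome map $f^O$: one must be careful that $f^O$ restricted to $O^m$ lands in $O^{f^M(m)}$ and is injective there (so that pushing forward $\xi^B$ gives an honest distribution on $O^m$ rather than collapsing outcomes), and that statistical equivalence of outcome \emph{pairs} $(m,k) \sim (m',k')$ is genuinely preserved — equation~\ref{def:otmorph2} as written only asserts preservation of $\sim$ on measurements and preparations, so either this must be derived from equation~\ref{def:otmorph} (since $d_{p,m}(k) = d_{f^P(p), f^M(m)}(f^O(k))$ for all $p$ forces $(f^M(m), f^O(k)) \sim (f^M(m'), f^O(k'))$ whenever $(m,k) \sim (m',k')$, \emph{provided} $f^P$ is surjective on equivalence classes of preparations, or at least that the statistical data of $A$ is determined by the image) or an additional morphism axiom is tacitly in force. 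I would flag this dependence explicitly and, if $f^P$ need not be surjective, invoke the earlier convention (as in Remark~\ref{rem:sign} and the treatment of minimal theories) that allows reducing to the case where no redundant statistical data is present.
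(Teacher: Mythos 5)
Your proposal is correct and takes essentially the same route as the paper: pull back a non-contextual ontological representation of $B$ to $A$ along $f$ on the same ontic space $\Omega_B$, via $\mu'_p := \mu_{f^P(p)}$ and $\xi'_m := \xi_{f^M(m)}$ (composed with $f^O$ on outcomes), check the statistics with equation~\ref{def:otmorph}, and deduce non-contextuality from the equivalence-preservation condition. The caveats you flag about $f^O$ and about equivalence of outcome pairs $(m,k)\sim(m',k')$ are real but are glossed over in the paper's own one-line justification as well, so they do not constitute a divergence in approach.
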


\begin{proof}
Let $f:A \rightarrow B$ be a morphism of operational theories. Let $(\Omega_B,\{\mu_p\}_{p \in P_B}, \{\xi_{m}\}_{m \in M_B})$ be a non-contextual ontological representation of $B$. This induces an ontological representation \newline $(\Omega_B, \{\mu'_p\}_{p \in P_A}, \{\xi'_{m}\}_{m \in M_A})$, which is defined as $\mu'_p := \mu_{fp}$, $\xi'_{m}:=\xi_{fm}$. Non-contextuality of this ontological representation is guaranteed by the equivalence preservation condition on $f$. It follows by contradiction that when $A$ is contextual, $B$ must be contextual.
\end{proof}

\subsection*{The Isomorphism}\label{sec:iso}
The assignment $A \mapsto Op(A)$ of an operational theory to each no-signalling empirical theory described in Section~\ref{sec:canhidvar} gives rise to the functor below.
  
\begin{align*}
\begin{tikzpicture}[xscale=2.75, yscale=3]
\node[] (A) at (0,0){$\mathcal{E}mp_{ns}$};
\node[] (B) at (2,0){$\mathcal{O}T$};
\node (C) at (0,-.25){$A$};
\node (D) at (2,-.25){$Op(A)$};
\node (E) at (0,-.5){$(f^S, f^C, f^{O_A})$};
\node (F) at (2,-.5){$(f^M, f^P, f^{O_{Op(A)}})$};
\draw[->] (A) to node[above]{$Op$} (B);
\draw[|->] (C) to node[above]{} (D);
\draw[|->] (E) to node[above]{} (F);
\end{tikzpicture}
\end{align*}

Here, $f^M$ is defined as the assignment on objects of $f^{\cM}$, $f^P:=f^S$ and $f^{O_A} = f^{O_{Op(A)}}$, since $O_A = O_{Op(A)}$. To verify that this is well-defined on morphisms, one needs to check that equations~\ref{def:otmorph} and~\ref{def:otmorph2} hold. Since we only consider no-signalling empirical theories, this follows directly from equations~\ref{eq:cp2} and~\ref{eq:ep}. Functoriality is straightforward. 

\begin{prop}
The functor $\Emp_{ns} \xrightarrow{Op} \Ot$ is an isomorphism
\end{prop}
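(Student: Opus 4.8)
The plan is to construct an explicit inverse functor $Op^{-1}:\Ot\to\Emp_{ns}$ and check that it is two-sided inverse to $Op$, both on objects and on morphisms. The construction is already essentially spelled out in Section~\ref{sec:canhidvar}: given an operational theory $B=(P,M,D,O)$, set the set of states to be $P$, build the measurement-label set $X$ from the one-element jointly measurable subsets of $M$, take the thin category ${\bf C}_B$ to be the preorder of joint measurements (Definition~\ref{def:jointm}) on $M$, let $\cM$ consist of the maximal jointly measurable sets, and define each state $\sigma_p$ on a context $C\in\cM$ by $\sigma_{p,C}(s):=\prod_{m\in C}d_{p,m}(s(m))$ — or, more carefully, by the distribution whose marginals recover the $d_{p,m}$, which exists and is no-signalling because joint measurability guarantees the compatibility of the $d_{p,m_S}$. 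One should check this yields a genuine no-signalling empirical theory (the restriction maps agree on overlaps, which is exactly the no-signalling condition, and it holds by condition~(ii) of Definition~\ref{def:jointm}). On morphisms, a morphism $f=(f^M,f^P,f^O)$ of operational theories satisfying \eqref{def:otmorph}--\eqref{def:otmorph2} is sent to the triple $(f^P,f^M,f^O)$ viewed as $(f^S,f^C,f^{O})$, where $f^C$ is $f^M$ regarded as a functor of the joint-measurability preorders; \eqref{def:otmorph} gives \eqref{eq:cp2} and \eqref{def:otmorph2} gives \eqref{eq:ep}, so this is a well-defined $\Emp_{ns}$-morphism.

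Next I would verify the two composites are identities. For $Op\circ Op^{-1}=\id_{\Ot}$ this is immediate: starting from $B$, the states of $Op^{-1}(B)$ are $P$, the objects of the thin category are the one-element joint measurements which are in bijection with $M$, and the distributions $d_{m,\sigma_p}(k)=\sigma_{p,m}(s)=d_{p,m}(k)$ reconstruct $D$ exactly, with $O$ unchanged; on morphisms the components are literally permuted back to where they started. For $Op^{-1}\circ Op=\id_{\Emp_{ns}}$ one uses that $A$ is no-signalling: the empirical theory $Op^{-1}(Op(A))$ has states $S_A$, measurement labels $Ob({\bf C}_A)$ (the one-element joint measurements of $M_{Op(A)}=Ob({\bf C}_A)$, which are just the original objects), and the same joint-measurability preorder, hence the same thin category ${\bf C}_A$ and the same cover $\cM_A$; and the reconstructed state on a context $C$ is the unique no-signalling distribution with the prescribed marginals $\sigma_m$, which is $\sigma_C$ itself precisely because $\sigma$ was no-signalling — there is a unique such joint distribution once one insists on the product/no-signalling normalization, or one simply observes that a no-signalling state is determined by its marginals on the cover. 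Again the morphism components are returned unchanged.

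The main obstacle, and the step deserving genuine care rather than a routine check, is the object-level round-trip $Op^{-1}\circ Op\cong\id_{\Emp_{ns}}$: one must be sure that passing from a no-signalling state $\sigma$ to the family of marginals $\{\sigma_m\}$ and back recovers $\sigma$ \emph{on the nose} (not merely up to statistical equivalence), and likewise that the thin category ${\bf C}_A$ is reconstructed exactly from the joint-measurability structure it induces on its own objects. The first point is where no-signalling is essential — it is exactly the hypothesis that makes the context distributions $\sigma_C$ recoverable from lower-dimensional data — and it is the reason the isomorphism is stated for $\Emp_{ns}$ rather than all of $\Emp$; a signalling theory would have distinct states with the same one-element marginals, so $Op$ would fail to be faithful (cf.\ Remark~\ref{rem:sign}). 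I would make this explicit by noting that for a no-signalling state the assignment $C\mapsto\sigma_C$ is forced by the sheaf-gluing/marginal conditions, so the correspondence $\sigma\leftrightarrow\{\sigma_m\}_m$ is a bijection. Once this is pinned down, functoriality of $Op^{-1}$ and the fact that both composites are literally the identity on components make the remaining verifications routine, and we conclude that $Op$ is an isomorphism of categories with inverse $Op^{-1}$.
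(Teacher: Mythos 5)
Your overall strategy (exhibit an explicit inverse built from the Section~\ref{sec:uni} correspondence and check both composites) is the same as the paper's, which simply records bijectivity on objects from that correspondence and then notes that thinness of the categories makes the morphism assignment bijective because equations~\ref{eq:cp2}--\ref{eq:ep} translate exactly into~\ref{def:otmorph}--\ref{def:otmorph2}. However, your reconstruction of the states contains a genuine error. You propose to recover $\sigma_{p,C}$ either as the product $\prod_{m\in C}d_{p,m}(s(m))$ or as ``the unique no-signalling distribution with the prescribed marginals $\sigma_m$,'' and you assert that ``a no-signalling state is determined by its marginals on the cover.'' Neither works: the product formula destroys correlations (for the Bell table in Section~\ref{sec:Bell}, the context $(a,b)$ has the perfectly correlated distribution $(1/2,0,0,1/2)$, while the product of its uniform one-element marginals is $(1/4,1/4,1/4,1/4)$), and a joint distribution on a context is in general not determined by its single-measurement marginals at all (the same uniform marginals are compatible with many distinct no-signalling contexts). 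No-signalling only guarantees that the marginals of the $\sigma_C$ agree on overlaps; it does not let you invert the marginalisation. With your reconstruction, $Op^{-1}\circ Op$ would not be the identity on any correlated state, so the claimed isomorphism would fail.

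The fix is already implicit in the paper's definition of $Op$: the measurements of $Op(A)$ are \emph{all} objects of ${\bf C}_A$, not just the one-element joint measurements, so each context $C\in\cM_A$ is itself an element of $M_{Op(A)}$ and its full outcome distribution is stored as $d_{C,\sigma}$. The inverse therefore recovers $\sigma_C$ by direct lookup, $\sigma_{p,C}(s)=d_{p,C}(k)$ for the outcome $k$ of the joint measurement $C$ corresponding to the section $s$, rather than by any gluing of elementary marginals. (Relatedly, your remark that the objects of the thin category ``are in bijection with $M$'' via the one-element joint measurements is off: $X$ is only the set of one-element objects, while $Ob({\bf C}_A)=M_{Op(A)}$ includes every joint measurement.) Once the state reconstruction is corrected in this way, the rest of your verification --- the round trips on objects and the componentwise identification of morphism conditions --- goes through and agrees with the paper's argument.
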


\begin{proof}
As discussed in section~\ref{sec:uni}, the assignment is bijective on objects. To see that it is injective on morphisms, note that the functor $f^{\cM}$ is completely determined by its assignments on objects, since ${\bf C}_A$ and ${\bf C}_B$ are thin categories.
Surjectivity follows from the fact that equations~\ref{eq:cp2} and~\ref{eq:ep} imply equations~\ref{def:otmorph} and~\ref{def:otmorph2}.
\end{proof}

We will show that the isomorphism $\Emp_{ns} \xrightarrow{Op} \Ot$ maps non-contextual empirical theories to operational theories that admit a factorizable non-contextual ontological representation. In order to do so, we first examine how the canonical ontological representation described in Section~\ref{sec:canhidvar} gives rise to a functor $\Emp^{NC}_{ns} \xrightarrow{R} \Or$. This functor maps each non-contextual empirical model to its canonical non-contextual ontological representation. It maps each morphism of empirical models to a morphism of ontological representations in an obvious way, such that the effect on the outcome statistics is the same in either model. It turns out that the composition of this functor with the forgetful functor $\Or \rightarrow \Ot$ equals $\Emp_{ns} \xrightarrow{Op} \Ot$ on the class of non-contextual empirical models.
\begin{prop}\label{prop:equiv}
For any choice of global sections, the assignment $A \mapsto R(A)$ defines an equivalence between the subcategory of non-contextual empirical theories and the subcategory of non-contextual, factorizable ontological representations. 
The image $Rf= (Rf, (Rf)^{\mu}, (Rf)^{\xi})$ of each morphism $f=(f^S, f^{\cM}, f^O)$ has the following components
\begin{align*}
Rf := Op(f) \hspace{1cm} 
& (Rf)^{\mu} ( \sigma_m )  :=   f\sigma_{f(m)}, \hspace{1cm}
& (Rf)^{\xi} (\xi_m)(s)(k)  := \delta_{s(f(m)),k}
\end{align*}
\end{prop}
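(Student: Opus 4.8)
The plan is to verify that the assignment $A \mapsto R(A)$, $f \mapsto Rf$ with the stated components defines a functor $\Emp^{NC}_{ns} \to \Or$, and then to upgrade this functor to an equivalence onto the full subcategory of non-contextual factorizable ontological representations. First I would check well-definedness on objects: this is essentially done in Section~\ref{sec:canhidvar}, where $R(A) = (\Omega_{R(A)}, \{\mu^{NC}_\sigma\}, \{\xi^{NC}_m\})$ was shown to realise the outcome statistics of $Op(A)$ and to be preparation and measurement non-contextual; factorizability is immediate since $\xi^{NC}_m(s)(k) = \delta_{\tilde s([m]),[k]}$ is a deterministic assignment, and a product of Kronecker deltas on disjoint coordinates of a joint section equals the delta on the tuple. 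Next I would check well-definedness on morphisms: given $f = (f^S,f^{\cM},f^O): A \to B$ satisfying equations~\ref{eq:cp2} and~\ref{eq:ep}, I must confirm that $Rf := (Op(f), (Rf)^\mu, (Rf)^\xi)$ with the displayed formulas satisfies the defining conditions of a morphism in $\Or$: namely $Op(f)$ is a morphism of operational theories (which holds because $Op$ is the isomorphism of the previous proposition), the compatibility conditions $f^\mu(\mu_p) = f^\mu(\mu)_{f^P(p)}$ and $f^\xi(\xi_m) = f^\xi(\xi)_{f^M(m)}$, and the statistics-matching equation~\ref{def:ORmorphism}. The last is the main computation: one expands both sides using the definition of $\xi^{NC}$, $\mu^{NC}$ and $\tilde d$, uses that $f$ preserves equivalence classes (so $\delta_{s(f(m)),k}$ transforms correctly under the quotient), and invokes equation~\ref{eq:cp2} to match the summed distributions; the argument parallels the chain of equalities in the proof of Lemma~\ref{lem:F'esssurj}.

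Having established functoriality, I would prove that $R$ is an equivalence of categories by showing it is essentially surjective, full, and faithful onto the subcategory $\Or^{NC,f}$ of non-contextual factorizable ontological representations. Essential surjectivity is exactly Lemma~\ref{lem:F'esssurj}: for any factorizable non-contextual ontological representation $B$ there is an empirical theory $A$ with a global section such that $R(A)$ and $B$ realise the same operational theory; since morphisms in $\Or$ carry no component between sets of ontological values, $R(A)$ and $B$ are isomorphic in $\Or$ whenever their underlying operational theories coincide, so $R(A) \cong B$. For faithfulness, note that $Rf$ determines $Op(f)$, and $Op$ is faithful (indeed an isomorphism) by the preceding proposition, so $f \mapsto Rf$ is injective on morphisms. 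For fullness, given a morphism $g = (g, g^\mu, g^\xi): R(A) \to R(B)$ in $\Or^{NC,f}$, its operational-theory component $g$ corresponds under the isomorphism $Op$ to a unique empirical transformation $f$ with $Op(f) = g$; one then checks that the $\mu$- and $\xi$-components of $Rf$ agree with $g^\mu, g^\xi$, which follows because in the canonical representation $\mu^{NC}$ and $\xi^{NC}$ are completely pinned down by the state labels and the equivalence classes of measurements, so the compatibility equation~\ref{def:ORmorphism} leaves no freedom.

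I expect the main obstacle to be bookkeeping around the two-stage quotient construction underlying $R$: the canonical representation is built from $\tilde{A'}$, the minimalisation of the $\{0,1\}$-valued rewriting $A'$ of $A$, so the section space $\Omega_{R(A)} = \cE(X(\mathbf{C}/\sim))$ and the formula $\xi^{NC}_m(s)(k) = \delta_{\tilde s([m]),[k]}$ implicitly involve passing through $A'$ and its quotient. Making the transformation $(Rf)^\xi(\xi_m)(s)(k) = \delta_{s(f(m)),k}$ genuinely well-defined requires that $f$ descend to the quotient categories and to the outcome-equivalence classes; this is guaranteed by equation~\ref{eq:ep}, but one must be careful that the choice of representatives in $\mathbf{C}/\sim$ (and the choice of global sections, which the statement explicitly allows to vary) does not affect the resulting morphism up to the coherence required for functoriality. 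A secondary subtlety is that $R$ is only claimed to be an \emph{equivalence}, not an isomorphism: distinct choices of global sections or of section representatives yield canonical representations that are isomorphic but not equal in $\Or$, so one should phrase the target as the subcategory $\Or^{NC,f}$ and argue at the level of isomorphism classes rather than on the nose. Once these coherence issues are dispatched, the remaining verifications are the routine delta-function manipulations already rehearsed in Lemmas~\ref{lem:F'esssurj} and~\ref{lem:gs}.
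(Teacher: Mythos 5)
Your proposal is correct and follows essentially the same route as the paper: verify functoriality via the delta-function computation and equation~\ref{eq:cp2}, obtain essential surjectivity from Lemma~\ref{lem:F'esssurj}, and establish fullness and faithfulness on hom-sets by reading the empirical transformation off the components of an $\Or$-morphism. Your two refinements -- deriving faithfulness from the fact that $Op$ is already an isomorphism, and noting explicitly that the absence of an ontological-value component in $\Or$-morphisms is what upgrades ``realise the same operational theory'' to ``isomorphic in $\Or$'' -- are welcome clarifications of points the paper's proof leaves implicit, but they do not change the argument.
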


\begin{proof}
We need to verify that for each $f:A \rightarrow B$ in $\Emp$, $Rf: R(A) \rightarrow R(B)$ is a well-defined morphism in the category of ontological representations. It is easy to see that since $f$ preserves statistical equivalence, $Rf$ does too.  By the following equations, $Rf$ also satisfies equation~\ref{def:ORmorphism}.

\begin{align}
\sum_{\lambda \in \Omega_{R(A)}}  \xi_{m}(\lambda)(k)\mu_{\sigma}(\lambda)
 &=  \sum_{\lambda \in \mathcal{E}(m)}  \delta_{\lambda({m}),k} \sigma_{m}(\lambda)\\ 
   &= \sum_{\lambda' \in \mathcal{E}(f(m))}  \delta_{\lambda'(f(m)),f^O(k)} f\sigma_{f(m)}(\lambda') \\
    &= \sum_{\lambda' \in \Omega_{R(B)}} Ff(\xi_{m})(\lambda')(f^O(k)) Rf(\mu_{\sigma})(\lambda') 
\end{align}

The equalities are obtained by unfolding definitions, application of equation~\ref{eq:cp2} and rewriting the summation.

 By Lemma \ref{lem:F'esssurj}, $R$ is essentially surjective on the subcategory of factorizable non-contextual ontological representations. We will show that the functor $R$ is injective on hom-sets. First of all, $\delta_{s(f(m)),k} = \delta_{s(g(m)),k)}$ for all $k \in O$ implies that $f^S=g^S$. Similarly, $R(f)^{\mu} = R(g)^{\mu}$ implies $f^C=g^C$. We will prove that $F$ is surjective on hom-sets. Let $(g, g^{\mu}, g^{\xi}): RA \rightarrow RB$ be a morphism in $\mathcal{O}R$. This corresponds to the morphism $g':A \rightarrow B$ in $\mathcal{E}mp$ with components $g'^{\cM}(m) = g^M(m)$ and $g'^{S}(\sigma)_{g'^{\cM}(m(s))} = g^P(\mu_{\sigma})(s)$ for $s \in \mathcal{E}(m)$. Since each $\xi$ in the image of $g$ is a delta function, it must be equal to $\delta_{\lambda(g(m)),k}$. Finally, to show that equation~\ref{def:otmorph2} holds, we take equation \ref{def:ORmorphism} and unfold the definitions of $\Omega_A$, $\Omega_B$, $g(\xi_M)$, and $\xi_M$. This gives us the equality below, which reduces to the second condition for transformations of empirical theories.

\begin{align}
\sum_{s \in \mathcal{E}(m)}  \delta_{s(m),k} \mu_{\sigma}(s) &=  \sum_{s \in \mathcal{E}(f(m))} \delta_{s(f(m)),k} g_{\mu}(\mu_{\sigma})(s)  
\end{align}
\end{proof}

\begin{theorem}\label{thm:iso}
The isomorphism $Op$ restricts to an isomorphism between the subcategory of non-contextual empirical theories and the subcategory of operational theories that do not admit a factorizable non-contextual ontological representation.
\end{theorem}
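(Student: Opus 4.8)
The plan is to assemble the isomorphism from the pieces already established in the preceding sections, so that very little new work is required. We have the functor $\Emp_{ns} \xrightarrow{Op} \Ot$ which the preceding proposition shows is an isomorphism of categories; we have the functor $R: \Emp^{NC}_{ns} \to \Or$ sending a non-contextual empirical theory to its canonical ontological representation, together with Proposition~\ref{prop:equiv} asserting that $R$ is an equivalence onto the full subcategory of factorizable non-contextual ontological representations; and we have the forgetful functor $G: \Or \to \Ot$ whose composite with $R$ agrees with $Op$ on $\Emp^{NC}_{ns}$ (this is the observation made just before Proposition~\ref{prop:equiv}, and it is the content of the calculations in Lemma~\ref{lem:F'esssurj} read in the category-theoretic language).

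The argument then proceeds as follows. First I would restrict the isomorphism $Op$ to the full subcategory $\Emp^{NC}_{ns}$ of non-contextual no-signalling empirical theories; since $Op$ is a bijection on objects and on hom-sets, its restriction to any full subcategory is an isomorphism onto the full subcategory spanned by the image objects. So it suffices to identify that image, namely to show that an operational theory $B$ lies in the image of $\Emp^{NC}_{ns}$ under $Op$ if and only if $B$ admits a factorizable non-contextual ontological representation. For the forward direction: if $B = Op(A)$ with $A$ non-contextual, then $R(A)$ is by construction a factorizable non-contextual ontological representation, and $G(R(A)) = Op(A) = B$, so $B$ admits one. For the converse: if $B$ admits a factorizable non-contextual ontological representation, then by Lemma~\ref{lem:F'esssurj} there is an empirical theory $A$ with a global section — i.e. $A \in \Emp^{NC}_{ns}$ — such that $R(A)$ and the given representation realise the same operational theory, hence $Op(A) = G(R(A)) = B$. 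Thus the image is exactly the subcategory claimed, and fullness is inherited from fullness of $Op$.

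One should then note that "do not admit a factorizable non-contextual ontological representation" in the statement is the complementary phrasing via Theorem~\ref{thm:eq}: by that theorem, $B$ admits a factorizable non-contextual ontological representation precisely when the corresponding empirical theory admits a global section, i.e. is non-contextual, so the two descriptions of the target subcategory coincide and one can use whichever is convenient. I would close by remarking that since $Op$ is an isomorphism (not merely an equivalence), no essential-surjectivity or fullness-and-faithfulness bookkeeping beyond what is already in the cited results is needed — the restriction of a bijective-on-objects, bijective-on-hom-sets functor to a full subcategory is automatically an isomorphism onto the corresponding full subcategory.

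I expect the only real subtlety to be bookkeeping about which subcategories are taken to be full, and making sure the two characterisations of the target — "admits a factorizable non-contextual ontological representation" versus "does not admit" (evidently a typo for the former, or else resolved through Theorem~\ref{thm:eq})—are reconciled cleanly. There is no hard mathematical obstacle: every analytic step has already been carried out in Lemma~\ref{lem:F'esssurj}, Theorem~\ref{thm:eq}, and Proposition~\ref{prop:equiv}, and the present theorem is their formal corollary.
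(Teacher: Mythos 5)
Your proposal is correct and takes essentially the same route as the paper, whose entire proof is the commuting square $G\circ R = Op\circ\iota$ on $\Emp^{NC}_{ns}$ together with Proposition~\ref{prop:equiv} and Lemma~\ref{lem:F'esssurj}; your write-up merely makes explicit the image-identification step that the paper leaves implicit in the diagram. You are also right that ``do not admit'' in the statement is a typo for ``admit''.
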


\begin{proof}
Note that the following diagram commutes, where we write $\Emp^{NC}$ and $\Or^{FNC}$ for the subcategories of non-contextual empirical theories and factorizable non-contextual ontological representations, respectively.

\begin{align*}
\begin{tikzpicture}[xscale=2.75, yscale=3]
\node[] (A) at (0,0){$\mathcal{E}mp$};
\node[] (B) at (2,0){$\mathcal{O}T$};
\node (C) at (0,-.5){$\mathcal{E}mp^{NC}$};
\node (D) at (2,-.5){$\mathcal{O}R$};
\node (E) at (1,-.75){$\Or^{FNC}$};
\draw[->] (A) to node[above]{$Op$} (B);
\draw[right hook->] (C) to  (A);
\draw[->] (D) to node[right]{$G$} (B);
\draw[->] (C) to node[above]{$R$} (E);
\draw[right hook->] (E) to (D);
\end{tikzpicture}
\end{align*}
\end{proof}

\begin{corollary}
For models with perfectly predictable measurements and a maximally mixed preparation, $R$ restricts to an isomorphism between non-contextual empirical theories and non-contextual operational theories.
\end{corollary}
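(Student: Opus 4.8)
The plan is to combine the isomorphism $\Emp_{ns} \xrightarrow{Op} \Ot$ of the preceding Proposition with the equivalence $R$ of Proposition~\ref{prop:equiv} and the structural results about outcome-determinism proved earlier. Concretely, I would argue as follows. Restrict attention to the full subcategory of $\Ot$ consisting of operational theories with perfectly predictable measurements and a maximally mixed preparation (all of whose measurements satisfy assumption~\ref{pos:conv}), and write $\Emp^{pp}$ for its preimage under the isomorphism $Op$. By Theorem~\ref{thm:iso}, $Op$ restricts to an isomorphism from $\Emp^{NC} \cap \Emp^{pp}$ onto the subcategory of theories in this class that admit a factorizable non-contextual ontological representation. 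So the only thing left to check is that, \emph{within this class}, ``admits a factorizable non-contextual ontological representation'' is the same property as ``is non-contextual'', i.e. ``admits some non-contextual ontological representation''.

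The non-trivial inclusion is that non-contextuality implies factorizable non-contextuality. Here I would invoke Lemma~\ref{lem:od}: any preparation non-contextual (a fortiori, non-contextual) ontological representation of a theory in this class has all of its perfectly predictable measurements outcome-deterministic, and since every measurement in the class is perfectly predictable, the whole representation is outcome-deterministic. Then Theorem~6 of~\cite{liangspekkenswiseman} upgrades ``outcome-deterministic and measurement non-contextual'' to ``factorizable'', so the representation is in fact a factorizable non-contextual one. The reverse inclusion is immediate, since a factorizable non-contextual ontological representation is in particular a non-contextual one. Hence on this class the two subcategories of $\Ot$ coincide, and composing with the equivalence $R$ of Proposition~\ref{prop:equiv} (whose image is exactly the subcategory $\Or^{FNC}$ of factorizable non-contextual ontological representations) gives the desired isomorphism between non-contextual empirical theories in $\Emp^{pp}$ and non-contextual operational theories in the class.

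For the bookkeeping I would spell out the commuting triangle: the inclusion $\Emp^{NC}\cap\Emp^{pp} \hookrightarrow \Emp_{ns}$ followed by $Op$ equals $R$ followed by the forgetful functor $G:\Or \to \Ot$ (this is exactly the content of the diagram in the proof of Theorem~\ref{thm:iso}, now restricted further to the perfectly-predictable class). Since $R$ is an equivalence onto $\Or^{FNC}$ and $Op$ is an isomorphism, and since the image of $G\circ R$ lands precisely in the non-contextual operational theories of the class by the argument of the previous paragraph, we conclude that $R$ itself restricts to an (iso)equivalence between non-contextual empirical theories with perfectly predictable measurements and a maximally mixed preparation, and non-contextual operational theories of the same kind.

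The main obstacle is making sure that the class of operational theories in question is genuinely stable and well-defined under the isomorphism — that is, that the notions ``perfectly predictable measurement'', ``maximally mixed preparation'', and the convexity structure of assumption~\ref{pos:conv}/assumption~\ref{pos:convemp} transport correctly across the correspondence $A \leftrightarrow Op(A)$, so that a theory in $\Emp^{pp}$ really does map to a theory satisfying the hypotheses of Corollary~\ref{cor:can} and Lemma~\ref{lem:od}, and conversely. Once that compatibility is in place, the rest is a direct assembly of Theorem~\ref{thm:iso}, Corollary~\ref{cor:can}, Proposition~\ref{prop:equiv}, Lemma~\ref{lem:od}, and Theorem~6 of~\cite{liangspekkenswiseman}.
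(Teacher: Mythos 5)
Your proposal is correct and follows essentially the same route the paper intends: restrict Theorem~\ref{thm:iso} to the perfectly predictable class and then use Lemma~\ref{lem:od} together with Theorem~6 of~\cite{liangspekkenswiseman} (exactly as in the proof of Corollary~\ref{cor:can}) to identify ``admits a factorizable non-contextual ontological representation'' with ``non-contextual'' on that class. Nothing essential differs from the paper's (implicit) argument.
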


\section{Non-factorizable representations and POVM's}\label{sec:postlude}
In general, equivalence-based measurement contextuality implies sheaf-theoretic contextuality, but not necessarily the other way around. The two formalisms coincide in any scenario where factorizability can be justified, such as in the following three cases:

\begin{itemize}
\item For theories with perfectly predictable measurements and a maximally mixed preparation, non-contextuality of an ontological representation implies factorizability.
\item Any non-local scenario rules out non-factorizable ontological representations, as these would violate local causality. 
\item In theories that do not contain joint measurements, the notion of factorizability is vacuous.
\end{itemize}

The following example, which is known as Specker's Triangle, shows that the two formalisms are not equal for all scenarios.

\begin{example}[Specker's Triangle]\label{ex:ex1}
There are three parties, $A,B,C$, that each conduct a measurement with two outcomes, $\{0,1\}$. It is possible for two parties to apply the measurement at the same time, but it is not possible to apply all three measurements simultaneously. The measurement statistics is such that for any joint measurement, the obtained outcome is $(0,1)$  half of the time, and $(1,0)$, half of the time. 
\end{example}

This scenario cannot be realised by sharp measurements in quantum mechanics. 
However, one can find a POVM for each joint measurement that margnalises to the required outcomes: $(0\cdot P_{0,0}, \frac{1}{2} \cdot P_{0,1}, \frac{1}{2} \cdot P_{1,0}, 0 \cdot P_{1,1})$, where $P_{i,j}$ is the projector onto outcome $(i,j)$. Note however, that this POVM can be classically realised, by flipping a coin to decide on outcome $(0,1)$ or $(1,0)$.

\begin{lemma}
The scenario in Example~\ref{ex:ex1} is contextual in the sheaf sense, but non-contextual in the equivalence-based sense
\end{lemma}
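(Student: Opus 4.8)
The claim has two halves: (i) Specker's Triangle is contextual in the sheaf sense, and (ii) it is non-contextual in the equivalence-based sense. For (i), the plan is to show directly that the presheaf $\mathcal{D}_{\mathbb{R}}\mathcal{E}$ admits no global section. The system type here has $X=\{A,B,C\}$ and measurement cover $\mathcal{M}=\{\{A,B\},\{A,C\},\{B,C\}\}$, with outcome set $\{0,1\}$. The given state assigns to each context $\{A,B\}$ the distribution supported on $(0,1)$ and $(1,0)$, each with weight $\tfrac12$; in particular it is no-signalling, since the marginal onto any single measurement is the uniform distribution on $\{0,1\}$. A hypothetical global section $d\in\mathcal{D}_{\mathbb{R}}\mathcal{E}(X)$ is a distribution over functions $s:\{A,B,C\}\to\{0,1\}$ whose marginal onto each pair agrees with the state. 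The key point is that the state forces $A$ and $B$ to \emph{disagree} with certainty, and likewise for $A,C$ and for $B,C$; that is, any $s$ in the support of $d$ must satisfy $s(A)\neq s(B)$, $s(A)\neq s(C)$, $s(B)\neq s(C)$. No function from a $3$-element set to a $2$-element set can be pairwise injective (pigeonhole), so the support of $d$ must be empty, contradicting $\sum_s d(s)=1$. Hence no global section exists and the theory is sheaf-contextual.

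For (ii), the plan is to exhibit an explicit non-contextual ontological representation of the corresponding operational theory $Op(A)$, using the observation already made in the paragraph following Example~\ref{ex:ex1}: each joint measurement is realised by a POVM that is itself classically realisable by a coin flip. Concretely, I would take $\Omega=\{\lambda_{01},\lambda_{10}\}$, with every preparation inducing the uniform distribution $\mu_p(\lambda_{01})=\mu_p(\lambda_{10})=\tfrac12$ (there is effectively one preparation, or at least all preparations are statistically equivalent, so preparation non-contextuality is immediate). The response functions for the single-measurement labels are $\xi_A(\lambda_{01})(0)=1$, $\xi_B(\lambda_{01})(1)=1$, and for $\lambda_{10}$ the reverse; by symmetry the same recipe applies to $B$ and $C$. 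One checks that $\sum_\lambda \xi_m(\lambda)(k)\mu_p(\lambda)=\tfrac12$ for every measurement $m$ and outcome $k$, matching the operational marginals, and that the joint-measurement response functions $\xi_{\{A,B\}}$, obtained as the appropriate combination, reproduce the table: on $\lambda_{01}$ the joint outcome is deterministically $(0,1)$ and on $\lambda_{10}$ it is $(1,0)$, each with probability $\tfrac12$. Measurement non-contextuality holds because any two operationally equivalent measurement-outcome pairs are assigned the same $\xi$ by construction (indeed the restriction of $\xi_{\{A,B\}}$ to the $A$-component equals $\xi_A$, so this representation is even parameter-independent). Thus $Op(A)$ admits a non-contextual ontological representation, i.e.\ it is non-contextual in the equivalence-based sense.

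The only subtlety worth spelling out — and the step I expect to require the most care — is the interface between the two formalisms: one must be precise about which operational theory the sheaf scenario corresponds to (via $Op$, or rather since the scenario is stated abstractly, by choosing the natural state/preparation structure) and about the fact that the representation built in (ii) is genuinely \emph{non-factorizable}. This is the whole point of the example: the joint-measurement response function $\xi_{\{A,B\}}(\lambda_{01})$ is the indicator of $(0,1)$, which does \emph{not} factor as a product $\xi_A(\lambda_{01})\cdot\xi_B(\lambda_{01})$ of marginals unless those marginals are $\{0,1\}$-valued in a compatible way — and here $\xi_A(\lambda_{01})$ puts all weight on $0$ while $\xi_B(\lambda_{01})$ puts all weight on $1$, so the product would be the indicator of $(0,1)$ only pointwise; the failure shows up once one also demands consistency across the overlapping contexts, exactly as in the pigeonhole obstruction of part (i). So the representation evades Theorem~\ref{thm:eq} precisely because it is not factorizable, which is consistent with (i). I would close by remarking that this does not contradict Theorem~\ref{thm:eq}: that theorem equates sheaf non-contextuality with the existence of a \emph{factorizable} non-contextual representation, and Specker's Triangle has a non-contextual representation that is necessarily non-factorizable.
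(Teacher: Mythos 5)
Your part (i) is correct and is essentially the paper's argument: the three pairwise perfect anti-correlations force any section in the support of a global distribution to be pairwise-distinct on $\{A,B,C\}$ with only two outcome values, which is impossible by pigeonhole.

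Part (ii), however, has a genuine gap: the two-state representation you propose is not measurement non-contextual, and cannot be repaired while keeping the response functions deterministic. First, in this scenario every single-measurement/outcome pair is statistically equivalent to every other one (all have probability $\tfrac12$ under every preparation), so measurement non-contextuality in Spekkens' sense requires $\xi_{(A,0)}=\xi_{(A,1)}=\xi_{(B,0)}=\dots$ as functions on $\Omega$; your assignment $\xi_A(\lambda_{01})(0)=1$, $\xi_A(\lambda_{01})(1)=0$ already violates this (and likewise $(\{A,B\},(0,1))\sim(\{A,B\},(1,0))$ forces the joint response functions to be equal, which your indicator functions are not). Second, even setting that aside, the construction cannot be made consistent across the three overlapping contexts: by Lemma~\ref{lem:nosig} a measurement non-contextual representation is parameter independent, so the single-measurement response functions are determined by marginalising the joint ones, and demanding that all three pairs anti-correlate deterministically at each $\lambda$ reproduces exactly the pigeonhole obstruction of part (i) at the level of ontic states. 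You gesture at this difficulty in your closing paragraph but do not resolve it. The fix is the paper's construction: take $\Omega=\{*\}$ a singleton, $\mu_p(*)=1$ for every preparation, $\xi_m(*)(k)=\tfrac12$ for each elementary measurement and outcome, and $\xi_m(*)\bigl((0,1)\bigr)=\xi_m(*)\bigl((1,0)\bigr)=\tfrac12$ for each joint measurement. This reproduces the statistics, is trivially preparation and measurement non-contextual (all required response functions coincide), is parameter independent, and is non-factorizable --- which is exactly why it does not contradict Theorem~\ref{thm:eq}, as you correctly observe at the end.
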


\begin{proof}
The marginal probabilities for each of the individual measurements are $\frac{1}{2}$ for either of the outcomes. It follows that all measurements are statistically equivalent, and hence, should not be distinguishable on the ontological level. This means that we can define the set $\Omega:= \{*\}$ to be a singleton set. We set $\mu_p(*)=1$ for any preparation of this scenario, $\xi_m(*)(0) = \xi_m(*)(1)= \frac{1}{2}$ for each of the elementary measurements, and $\xi_m(*)(0,1) = \xi_m(*)(1,0)= \frac{1}{2}$, for each of the joint measurements. 
On the other hand, it is not possible to define a factorizable non-contextual ontological representation. It is easy to see this, since without loss of generality, any global section of measurement outcomes to the presheaf describing this scenario must assign the same outcome to measurement $A$ and $B$. But that means that it does not marginalise to an admissible outcome for the joint measurement of $A$ and $B$. 
\end{proof}

For a complete comparison of the two notions, a better understanding of unsharp measurements is required. Another point of consideration is the extent to which the functors respect additional assumptions.  We have shown that for all known examples of contextuality conditional to assumptions in the equivalence-based framework, the two notions coincide. However, this may not be the case in general. Ideally, one would like to have a specification of the class of scenarios and assumptions for which the formalisms are different. We leave this for future work.

\section*{Acknowledgements}
I would like to thank Samson Abramsky for his supervision; Matty Hoban, Jonathan Barrett, Rui Soares Barbosa, Kohei Kishida and Ravi Kunjwal for helpful discussions. Support from the EPSRC Doctoral Training Partnership is also gratefully acknowledged.

\bibliographystyle{eptcs}
\bibliography{AEPC}{}

\end{document}